\numberwithin{equation}{section}
\theoremstyle{plain}
	\newtheorem{theorem}{Theorem}
		\numberwithin{theorem}{section}
	\newtheorem{lemma}[theorem]{Lemma}
	\newtheorem{proposition}[theorem]{Proposition}
	\newtheorem{corollary}[theorem]{Corollary}
\newtheorem{question}[theorem]{Question}
	\newtheorem*{theorem*}{Theorem}
	\newtheorem*{lemma*}{Lemma}
	\newtheorem*{prop*}{Proposition}
	\newtheorem*{cor*}{Corollary}
	\newtheorem*{conj*}{Conjecture}
\theoremstyle{definition}
	\newtheorem{example}[theorem]{Example}
	\newtheorem*{example*}{Example}
	\newtheorem{definition}[theorem]{Definition}
	\newtheorem{remark}[theorem]{Remark}
\begin{document}


\title{Topological invariants for words of linear factor complexity}

\author{Jason P. Bell}
\address{Department of Pure Mathematics\\
University of Waterloo\\
Waterloo, ON N2L 3G1\\
Canada}
\email{jpbell@uwaterloo.ca}
\thanks{The author was supported by NSERC grant RGPIN-2016-03632.}

\begin{abstract} Given a finite alphabet $\Sigma$ and a right-infinite word $w$ over the alphabet $\Sigma$, we construct a topological space ${\rm Rec}(w)$ consisting of all right-infinite recurrent words whose factors are all factors of $w$, where we work up to an equivalence in which two words are equivalent if they have the exact same set of factors (finite contiguous subwords).  We show that ${\rm Rec}(w)$ can be endowed with a natural topology and we show that if $w$ is word of linear factor complexity then ${\rm Rec}(w)$ is a finite topological space.  In addition, we note that there are examples which show that if $f:\mathbb{N}\to \mathbb{N}$ is a function that tends to infinity as $n\to \infty$ then there is a word whose factor complexity function is ${\rm O}(nf(n))$ such that ${\rm Rec}(w)$ is an infinite set.  Finally, we pose a realization problem: which finite topological spaces can arise as ${\rm Rec}(w)$ for a word of linear factor complexity? 
\end{abstract}
\subjclass[2010]{68R15, 68Q45, 11B85}
\keywords{combinatorics on words, complexity, recurrent words, topology}
\maketitle

\tableofcontents

\section{Introduction}
An important feature in the study of combinatorics of words is the search for meaningful invariants, which give insight into the underlying complexity and structure of given words.  There are numerous examples of such invariants in the theory, such as the critical exponent \cite{BR}, cyclic complexity \cite{CFSZ}, arithmetical complexity \cite{AFF}, abelian complexity \cite{Kar}, Lie complexity \cite{BS1}, Lempel-Ziv complexity \cite{LZ}, letter and word frequencies \cite[Chapt. 1]{AS}, and the factor complexity \cite[Chapt. 10]{AS}.

We recall that, given a right-infinite word $w$ over a finite alphabet $\Sigma$, the \emph{factor complexity function} of $w$, $p_w:\mathbb{N}\to \mathbb{N}$, is the map whose value at $n$ is the number of distinct factors (contiguous finite-length subwords) of $w$ of length $n$.  A result of Morse and Hedlund (see \cite[Theorem 10.2.6]{AS}) shows there is a striking gap: either the factor complexity function of a word $w$ is uniformly bounded, in which case $w$ is eventually periodic; or $p_w(n)\ge n+1$ for all $n$.  

For this reason, it is an especially important project within the field of combinatorics of words to better understand the class of words whose complexity functions lie just on the other side of this gap.  In particular, the Sturmian words are the words $w$ for which $p_w(n)=n+1$ for all $n$ and their study is an important area of focus within the field of combinatorics on words.  More generally, there is the class of words of \emph{linear factor complexity}, which have the property that there is a positive constant $C$ such that $p_w(n)\le Cn$.  This class of words contains many classical examples of words, including all automatic words and all Sturmian words \cite[Chapt. 10]{AS}.  

Much is now known about words of linear factor complexity, and there has been a long history of studying this class of words (see, for just a few examples, \cite[Chapt. 10]{AS} and \cite{Cass, Cassaigne:1996, Mignosi:1989}).  The object of this paper is to introduce new topological invariants, which we hope will provide a coarse taxonomy of words of linear factor complexity, which can then be used to gain greater insight into this important class of words.

We recall that if $w$ is a right-infinite word, then we have a set ${\rm Fac}(w)$, which is the collection of factors of $w$.  Then we have $p_w(n)=\#\{ u\in {\rm Fac}(w)\colon |u|=n\}$, where $|\cdot |$ is the usual length function.

Given two right-infinite words $w,w'$ over $\Sigma$, we declare that $w$ is \emph{equivalent} to $w'$ if $${\rm Fac}(w)={\rm Fac}(w').$$
This induces an equivalence relation $\sim$ on the set of right-infinite words over $\Sigma$, and we let $[w]$ denote the equivalence class of $w$.  
The collection of subsets of ${\rm Fac}(w)$ is a poset under inclusion and this allows us to put a partial order on equivalence classes of right-infinite words over $\Sigma$ by declaring that
\begin{equation} \label{eq:prec} 
[w]\preceq [w']\qquad {\rm if}\qquad {\rm Fac}(w)\supseteq {\rm Fac}(w'),
\end{equation} and we have
 $[w]\preceq [w']$ and $[w']\preceq [w]$ if and only if $[w]=[w']$.  

We recall that a right-infinite word $w$ is \emph{recurrent} if each factor of $w$ occurs infinitely many times in $w$; $w$ is \emph{uniformly recurrent} if for each factor $u$ of $w$, there is some $N=N(u)$ such that every length-$N$ factor of $w$ contains $u$ as a subfactor; $w$ is periodic if $w=uuu\cdots =u^{\omega}$ for some finite word $u$.  These properties are all preserved under this notion of equivalence (see Lemma \ref{lem:equiv}) and in particular, we can speak unambiguously about equivalence classes being recurrent, uniformly recurrent, and periodic.  

The sets that we study in this work are given below.

Let $w$ be a right-infinite word over $\Sigma$.  We define the following sets:
\begin{itemize}
\item the \emph{total spectrum} of $w$ is the set 
\begin{equation}\label{eq:spec1} {\rm Tot}(w)=\{[v]\colon  [w]\preceq [v]\};\end{equation}
\item the \emph{recurrent spectrum} of $w$ is the set
\begin{equation}\label{eq:spec2}{\rm Rec}(w)=\{[v]\colon v~{\rm recurrent}, [w]\preceq [v]\};\end{equation}
\item the \emph{uniformly recurrent spectrum} of $w$ is the set
\begin{equation}\label{eq:spec3}{\rm URec}(w)=\{[v]\colon v~{\rm recurrent}, [w]\preceq [v]\};\end{equation}
\item the \emph{periodic spectrum} of $w$ is the set
\begin{equation}\label{eq:spec4}{\rm Per}(w)=\{[v]\colon v~{\rm periodic}, [w]\preceq [v]\};\end{equation}
\end{itemize}

In particular, we have the containments
\begin{equation}
{\rm Tot}(w)\supseteq {\rm Rec}(w)\supseteq {\rm URec}(w)\supseteq {\rm Per}(w).
\end{equation}

Observe that ${\rm Tot}(w)$, ${\rm Rec}(w)$, ${\rm URec}(w)$, and ${\rm Per}(w)$ are all posets under the ordering $\prec$.
The main result of this paper is the following.
\begin{theorem}\label{thm:main1}
Let $w$ be a right-infinite word over a finite alphabet $\Sigma$ and suppose that there is a constant $C$ such that $p_w(n)\le Cn$ for all $n\ge 1$.  Then $\#{\rm Rec}(w)$ is finite and we have the upper bounds
$$\#{\rm Rec}(w)\le \left( \limsup_{n\to\infty} p_w(n+1)-p_w(n)\right)+\lceil C\rceil!^2 $$ and $$\#{\rm URec}(w)\le  \left( \limsup_{n\to\infty} p_w(n+1)-p_w(n)+1\right) + C^2$$ 
\end{theorem}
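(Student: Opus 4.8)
The plan is to recast each class $[v]\in {\rm Rec}(w)$ as a topologically transitive sublanguage $L_v:={\rm Fac}(v)\subseteq {\rm Fac}(w)$ --- a factorial, extendable set of words in which any two members $u_1,u_2$ can be joined, $u_1zu_2\in L_v$ for some $z$ --- and to analyze these through the Rauzy graphs $\Gamma_n(w)$, whose vertices are the length-$n$ factors of $w$ and whose edges are its length-$(n+1)$ factors. Transitivity of $L_v$ forces its length-$n$ factors to span a strongly connected subgraph of $\Gamma_n(w)$, and by Lemma \ref{lem:equiv} the minimal elements of ${\rm Rec}(w)$ under $\preceq$ are exactly the uniformly recurrent classes, i.e. the minimal subshifts. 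I would first record that linear complexity makes $s:=\limsup_n\bigl(p_w(n+1)-p_w(n)\bigr)$ finite, since it is standard that $p_w(n)\le Cn$ is equivalent to the boundedness of the first differences.

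The engine of the argument is a count of the nontrivial strongly connected components of $\Gamma_n(w)$ at well-chosen large $n$, together with a charging scheme, and I would treat ${\rm URec}(w)$ by separating aperiodic from periodic classes. For an aperiodic minimal class, Morse--Hedlund gives $p_{L_v}(m+1)-p_{L_v}(m)\ge 1$ for all $m$, so $L_v$ carries a right-special factor of each length $n$, which is then right-special for $w$ as well; since the number of right-special factors of $w$ of length $n$ is at most $p_w(n+1)-p_w(n)$, the aperiodic count is $\le s+1$ once the charged special factors are seen to be pairwise distinct across distinct classes. For the periodic classes I would use that $[u^{\omega}]\in{\rm Per}(w)$ forces $u^{k}\in{\rm Fac}(w)$ for every $k$: for primitive $u$ of length $d$ the length-$2d$ prefix of $u^{\omega}$ is then a factor of $w$ of smallest period $d$, distinct primitive roots give distinct such factors, and bounding the number of factors of small period by the complexity yields a count on the order of $C^{2}$. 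Assembling the two pieces gives the stated estimate for $\#{\rm URec}(w)$.

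To pass from ${\rm URec}(w)$ to ${\rm Rec}(w)$ I would exploit the eventual stabilization of the condensation (the acyclic quotient by strongly connected components) of $\Gamma_n(w)$: for large $n$ the number of nontrivial components is constant, and every recurrent class is supported on a single stabilized component, inside which it is pinned down by the minimal subsystems it contains together with the pattern in which their consecutive occurrences are interleaved along a transitive orbit. The aperiodic interleavings are already charged to $s$, while the periodic building blocks, being bounded in number by the previous step, can be chained into transitive systems in boundedly many ways; estimating the number of admissible chaining/ordering patterns on these blocks by $\lceil C\rceil!^{2}$ produces the factorial term and completes the bound for $\#{\rm Rec}(w)$.

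The main obstacle is the separation step underlying both halves. Two distinct minimal (or recurrent) subshifts can share arbitrarily long common factors --- distinct Sturmian subshifts are the prototype --- so at no finite level $n$ are they automatically distinguished by their factor sets; worse, sharing even one common factor forces them into the \emph{same} strongly connected component of $\Gamma_n(w)$, so the raw component count does not by itself separate classes. The crux is therefore to choose the scale $n$ and the charged special or periodic factors so that distinct classes receive distinct charges, and I expect this to rest on a Fine--Wilf / bounded-overlap argument showing that the factors shared by two distinct components have length bounded in terms of $C$, so that beyond that scale the charging map is genuinely injective.
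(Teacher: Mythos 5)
The decisive gap is in your passage from ${\rm URec}(w)$ to ${\rm Rec}(w)$, which is the actual content of the theorem. You assert that a recurrent class is ``pinned down by the minimal subsystems it contains together with the pattern in which their consecutive occurrences are interleaved,'' and that these patterns number at most $\lceil C\rceil!^2$; but no such structure theorem is formulated, let alone proved, and showing that only finitely many recurrent classes can lie over a fixed finite set of uniformly recurrent ones is precisely what needs proof. Rauzy-graph component data cannot do this: a recurrent class and any minimal class it contains span strongly connected subgraphs of $\Gamma_n(w)$ sharing vertices at every level $n$, hence lie in the \emph{same} strongly connected component for all $n$, so the (stabilized or not) component count cannot even detect the presence of non-minimal recurrent classes, much less count them. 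The paper's mechanism is entirely different and is what you are missing: Lemma \ref{lem:est} shows that an aperiodic recurrent $v$ of linear complexity satisfies $p_v(n;u)\ge n+1-|u|$ for every factor $u$, whence (Corollary \ref{cor:a}) a strict containment $[v]\prec [v']$ of recurrent classes forces $\limsup_n p_{v'}(n)/n\le \limsup_n p_v(n)/n-1$. This complexity drop bounds chain lengths (Corollary \ref{cor:acc}) and, combined with the bound of at most $C$ aperiodic minimal elements (Lemma \ref{lem:OK}), drives a minimal-counterexample induction on $\limsup_n p_w(n)/n$ (Theorem \ref{thm:ceil}); the factorial $\lceil C\rceil!^2$ arises from unrolling the recursion ``at most $C$ branches, each of complexity ratio at most $C-1$,'' not from counting orderings or chainings of blocks. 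Nothing in your proposal plays the role of Corollary \ref{cor:a}, and without it the finiteness of ${\rm Rec}(w)$ is untouched.

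Your diagnosis of the ``main obstacle'' is also wrong in both directions, and this affects the ${\rm URec}(w)$ half. Distinct uniformly recurrent classes cannot share arbitrarily long factors: if $[z]\ne [z']$ are uniformly recurrent, then (by Lemma \ref{lem:22} --- note that under the paper's order $\preceq$ these classes are the \emph{maximal} elements, not the minimal ones, and it is that lemma rather than Lemma \ref{lem:equiv} which identifies them) some factor $y$ of one is not a factor of the other, and uniform recurrence places $y$ inside every sufficiently long factor of the first, so all sufficiently long factors of the first avoid the second; in particular two distinct Sturmian systems share only finitely many factors, so your prototype obstacle does not occur for minimal classes. This per-pair finiteness is exactly how the paper obtains the $C^2$ term: if $t>C$ aperiodic uniformly recurrent classes sat over one of the at most $C$ minimal elements $[v_i]$, their length-$n$ factor sets would be pairwise disjoint for large $n$, giving $Cn\ge p_{v_i}(n)\ge t(n+1)$, a contradiction. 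Conversely, the uniform lemma you propose to rest everything on --- that factors shared by distinct classes have length bounded in terms of $C$ alone --- is false: with $u=0^{d-1}1$ and $v=0^{d-2}11$, the word $w=u^2v^2u^4v^4u^8v^8\cdots$ has $p_w(n)\le Cn$ for an absolute constant $C$, yet the distinct classes $[u^{\omega}],[v^{\omega}]\in {\rm Per}(w)$ share the factor $0^{d-2}1$ of length $d-1$, with $d$ arbitrary. The same example undercuts your periodic count: primitive periods occurring in ${\rm Per}(w)$ are not bounded in terms of $C$ (and can grow geometrically), so ``bounding the number of factors of small period by the complexity'' does not give $O(C^2)$; the paper instead imports $\#{\rm Per}(w)\le \limsup_n\bigl(p_w(n+1)-p_w(n)\bigr)+1$ from \cite{BS1} and uses it for the periodic part of both stated inequalities.
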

Upper bounds for ${\rm Per}(w)$ have been obtained previously \cite{BS1}, where it is shown that $\#{\rm Per}(w)\le \limsup_{n\to\infty} p_w(n+1)-p_w(n)+1$, which a result of Cassaigne \cite{Cass} shows is finite when $w$ has linear factor complexity.  Bounds for ${\rm Per}(w)$ had been previously considered in other contexts \cite{KS}. 

We also note that if $f$ is a weakly increasing function that tends to $\infty$ then there exists a word $w$ with $p_w(n)={\rm O}(nf(n))$ such that ${\rm Per}(w)$ and ${\rm Rec}(w)$ are infinite (see Remark \ref{rem:rec} for details).

We also show that the space ${\rm Rec}(w)$ can be endowed with a natural topology, reminiscent of the Zariski topology in algebraic geometry.  The appeal of having a topology is that one can consider continuous maps and use the additional topological structure to obtain new results for words.  For example, we show the factor complexity function gives a continuous map from ${\rm Rec}(w)$ to the space of $\mathbb{N}$-valued sequences in a natural sense (see Theorem \ref{thm:alex})

The outline of this paper is as follows. In \S2, we prove basic properties of the four spectra defined in Equations (\ref{eq:spec1})--(\ref{eq:spec4}).  In \S3, we define the radical of a word, which we later relate to the uniformly recurrent spectrum.  
In \S4, we develop the basic topological results about ${\rm Rec}(w)$ and prove a result about continuity of the factor complexity function in this framework.  In \S5 we prove Theorem \ref{thm:main1} and we conclude by posing questions about these constructions in \S6.

\section{Basic properties of spectra}
Let $\Sigma=\{x_1,\ldots ,x_d\}$ be a finite set with $d\ge 2$.  Let $w$ be a right-infinite word over $\Sigma$.  We recall that
$${\rm Fac}(w) = \{ v\in \Sigma^*\colon v~{\rm is ~a~factor~of~}w\},$$ which we call the \emph{factor set} of $w$.
If we adjoin an absorbing element ${\bf 0}$, ${\rm Fac}(w)\cup \{{\bf 0}\}$ is a monoid, where we declare that ${\bf 0}\cdot z=z\cdot {\bf 0} ={\bf 0}$ for every $z\in {\rm Fac}(w)$, ${\bf 0}\cdot {\bf 0}={\bf 0}$, and for $z,y\in {\rm Fac}(w)$, and $z\cdot y$ is the concatenation of $z$ and $y$ if it is a factor of $w$ and is ${\bf 0}$ otherwise.  We thus occasionally call ${\rm Fac}(w)$ the \emph{factor monoid} of $w$ when we wish to emphasize the monoidal structure, with the understanding that we are then adjoining a zero element.
We then define the set
\begin{equation}
I(w) = \Sigma^*\setminus {\rm Fac}(w),
\end{equation}
which we call the \emph{ideal} of $w$.  Then $I(w)$ is closed under left and right concatenation by words in $\Sigma^*$. 
In particular, if we let $\mathbb{Q}\{x_1,\ldots ,x_d\}$ denote the free associative $\mathbb{Q}$-algebra on $\Sigma$ then the $\mathbb{Q}$-span of elements of $I(w)$ will be a two-sided ideal, $\langle I(w)\rangle$, of $\mathbb{Q}\{x_1,\ldots ,x_d\}$ and we define the algebra 
\begin{equation}
A_w:=\mathbb{Q}\{x_1,\ldots ,x_d\}/\langle I(w)\rangle,\label{eq:Aw}\end{equation}
 which we call the \emph{monomial algebra} associated to $w$.

The partial order $\prec$ on the equivalence classes of right-infinite words over $\Sigma$ given in Equation (\ref{eq:prec}) can be stated in terms of ideals as follows:
\begin{equation}
\label{eq:prec2}
[w]\preceq [w']\qquad {\rm if}\qquad I(w)\subseteq I(w').\end{equation}

Although equivalent words can be very different from one another in general, many natural combinatorial properties are preserved under this equivalence as the following result shows.

\begin{lemma} Let $\Sigma$ be a finite alphabet and let $w$ and $w'$ be equivalent right-infinite words over $\Sigma$.
Then the following hold:
\begin{enumerate}
\item $w$ is recurrent if and only if $w'$ is recurrent;
\item $w$ is uniformly recurrent if and only if $w'$ is uniformly recurrent;
\item $w$ is periodic if and only if $w'$ is periodic.
\end{enumerate}
\label{lem:equiv}
\end{lemma}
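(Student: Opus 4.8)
The plan is to show that each of the three properties can be characterised purely in terms of the factor set ${\rm Fac}(w)$, so that all three are automatically invariant under $\sim$: if ${\rm Fac}(w)={\rm Fac}(w')$, then any condition phrased solely in terms of which finite words are factors holds for $w$ if and only if it holds for $w'$. Part (2) requires no further work, since the definition of uniform recurrence is already of this shape: $w$ is uniformly recurrent exactly when for every $u\in {\rm Fac}(w)$ there is an $N$ such that every $v\in {\rm Fac}(w)$ with $|v|=N$ satisfies $u\in {\rm Fac}(v)$, a statement mentioning only the factor set. The content of the lemma is therefore to produce such factor-set reformulations for recurrence and for periodicity.

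For (1) I would first record the elementary reformulation that $w$ is recurrent if and only if every factor of $w$ occurs at least twice in $w$. The forward implication is immediate; for the converse, suppose some factor $u$ had a last occurrence, beginning at position $p$, and apply the hypothesis to the prefix $P$ of $w$ that ends at this occurrence, so that $u$ is a suffix of $P$. A second occurrence of $P$, which must begin at some position $q\ge 1$, would then force an occurrence of $u$ beginning at position $q+p>p$, contradicting the choice of $p$. Finally, ``$u$ occurs at least twice in $w$'' is itself a factor-set condition, being equivalent to the existence of some $v\in {\rm Fac}(w)$ in which $u$ occurs at least twice as a subword of the finite word $v$. Hence recurrence depends only on ${\rm Fac}(w)$, giving (1).

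Part (3) is the main obstacle, and I would reduce it to (1) together with the Morse--Hedlund theorem recalled in the introduction, noting that boundedness of $p_w$ depends only on ${\rm Fac}(w)$ since $p_w(n)=\#\{u\in {\rm Fac}(w)\colon |u|=n\}$. Suppose $w$ is periodic and $w'\sim w$. Being periodic, $w$ has bounded factor complexity and is recurrent, so $w'$ has bounded complexity and, by (1), is recurrent; by Morse--Hedlund $w'$ is then eventually periodic. It thus suffices to prove the key claim that a word which is both eventually periodic and recurrent is in fact periodic. Writing such a word as $w'=xy^{\omega}$ with $y$ of minimal (primitive) length $p$ and transient $x$ of minimal length $m$, I would use recurrence to locate an occurrence of the length-$(m+p)$ prefix at some position $j\ge m$. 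Matching that occurrence against the purely periodic tail forces $y$ to be invariant under a cyclic shift by $j \bmod p$, whence $j\equiv 0 \pmod p$ by primitivity of $p$; feeding this congruence back in shows that $w'_i=w'_{i+p}$ holds for all $i\ge 0$, so $m=0$ and $w'$ is periodic. The delicate point is precisely this period-alignment step, where primitivity of the tail period is what rules out a nontrivial transient.
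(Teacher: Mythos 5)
Your proof is correct and follows essentially the same route as the paper: recurrence and uniform recurrence are recognized as conditions expressible purely in terms of ${\rm Fac}(w)$ (the paper's argument that a non-reoccurring factor $z$ contradicts the existence of a factor $zyz$ is exactly your ``occurs at least twice'' criterion in contrapositive form), and periodicity is handled via the characterization that a right-infinite word is periodic if and only if it is recurrent and has bounded factor complexity. The only difference is one of detail: the paper cites that characterization as known, whereas you prove its nontrivial half (a recurrent, eventually periodic word is purely periodic) from scratch via the primitive-period alignment argument, and you also spell out the prefix trick needed to pass from ``some factor occurs only finitely often'' to ``some factor occurs exactly once''; both of these supplied arguments are correct.
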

\begin{proof}
Suppose that $w$ is recurrent and that $w'$ is not. Then there is some factor $z$ of $w'$ that does not reoccur in $w'$.  But since $w$ is recurrent and since $z$ is also a factor of $w$, there is some factor $y$ of $w$ such that $zyz$ is a factor of $w$ and hence of $w'$, contradicting that $z$ does not reoccur in $w'$.  Uniform recurrence is proved similarly: if $w$ is uniformly recurrent and $z$ is a factor of $w$ then there is some $N=N(z)$ such that all factors of $w$ of length $N$ contain $z$ as a subfactor; but ${\rm Fac}(w)={\rm Fac}(w')$ and so all factors of $w'$ of length $N$ contain $z$ too, and thus $w'$ is uniformly recurrent.  Finally, if ${\rm Fac}(w)={\rm Fac}(w')$ then $p_w(n)=p_{w'}(n)$ for all $n$ and since a right-infinite word is periodic if and only if it is recurrent and has uniformly bounded factor complexity function, we see that $w$ is periodic if and only if $w'$ is periodic.
\end{proof}
In light of Lemma \ref{lem:equiv}, we say that an equivalence class $[w]$ is recurrent (resp. uniformly recurrent, resp. periodic) if a representative, and hence every representative, is recurrent (resp. uniformly recurrent, resp. periodic).

We now develop the theory of the spectra defined in Equations (\ref{eq:spec1})--(\ref{eq:spec4}).  As noted before, ${\rm Tot}(w)$, ${\rm Rec}(w)$, ${\rm URec}(w)$, and ${\rm Per}(w)$ are posets under $\prec$; furthermore, all elements of ${\rm URec}(w)$ and ${\rm Per}(w)$ are maximal, which is an immediate consequence of the following lemma.
\begin{lemma} Let $\Sigma$ be a finite alphabet and let $w$ be a right-infinite word over $\Sigma$.  Then $[v]\in {\rm Tot}(w)$ is maximal with respect to the order $\prec$ if and only if $[v]$ is uniformly recurrent.  Moreover, for every $[u]\in {\rm Tot}(w)$, there is some $[v]$ in ${\rm URec}(w)$ with $[u]\preceq [v]$.  \label{lem:22}
\end{lemma}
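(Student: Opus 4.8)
The plan is to establish the equivalence first and then read off the ``moreover'' clause from it via a Zorn's-lemma argument. Throughout I would work with factor sets, recalling that $[v]\preceq[v']$ means ${\rm Fac}(v)\supseteq{\rm Fac}(v')$, so that $[v]$ being maximal in ${\rm Tot}(w)$ is equivalent to there being no right-infinite word whose factor set is a proper subset of ${\rm Fac}(v)$. To see that uniform recurrence implies maximality, I would suppose $v$ is uniformly recurrent and that $[v]\prec[v']$ for some $[v']\in{\rm Tot}(w)$, so there is a factor $u\in{\rm Fac}(v)\setminus{\rm Fac}(v')$. Uniform recurrence supplies an $N$ such that every length-$N$ factor of $v$ contains $u$ as a subfactor; since ${\rm Fac}(v')\subseteq{\rm Fac}(v)$, any factor of $v'$ of length at least $N$ is a factor of $v$ and hence contains $u$, forcing $u\in{\rm Fac}(v')$, a contradiction. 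Thus no such $[v']$ exists and $[v]$ is maximal.

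For the converse I would prove the contrapositive by constructing a witness to non-maximality. If $v$ is not uniformly recurrent, there is a factor $u$ of $v$ such that for every $N$ some length-$N$ factor of $v$ avoids $u$. Let $S$ be the set of factors of $v$ that do not contain $u$ as a subfactor. Then $S$ is closed under passing to subwords and, by the choice of $u$, contains words of every length. Organizing $S$ into a rooted tree via the ``delete the final letter'' relation gives a finitely branching tree with nodes at every level, so K\"onig's lemma yields an infinite branch, that is, a right-infinite word $v'$ all of whose prefixes lie in $S$; since $S$ is subword-closed, in fact ${\rm Fac}(v')\subseteq S\subseteq{\rm Fac}(v)$. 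As $u\in{\rm Fac}(v)\setminus{\rm Fac}(v')$ this inclusion is strict, so $[v]\prec[v']$ and $[v]$ is not maximal.

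For the ``moreover'' clause I would apply Zorn's lemma to $P=\{[v']\in{\rm Tot}(w):[u]\preceq[v']\}$, which is nonempty as it contains $[u]$. Given a chain in $P$ with representatives $v_\alpha$, the factor sets ${\rm Fac}(v_\alpha)$ are totally ordered by reverse inclusion, and I set $F=\bigcap_\alpha{\rm Fac}(v_\alpha)$. The crucial point is that $F$ still supports a right-infinite word: $F$ is subword-closed, and for each $n$ the sets ${\rm Fac}(v_\alpha)\cap\Sigma^n$ are nonempty subsets of the \emph{finite} set $\Sigma^n$ forming a chain, whence $F\cap\Sigma^n$ is nonempty; a second K\"onig's lemma argument then produces a right-infinite word $v$ with ${\rm Fac}(v)\subseteq F$. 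One checks directly that $[v]\in P$ and that $[v]$ is an upper bound for the chain, so Zorn supplies a maximal element $[v]$ of $P$ with $[u]\preceq[v]$. Any element of ${\rm Tot}(w)$ lying above $[v]$ automatically lies in $P$, so $[v]$ is maximal in all of ${\rm Tot}(w)$, and the equivalence already proved shows $[v]$ is uniformly recurrent, i.e.\ $[v]\in{\rm URec}(w)$.

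The main obstacle I anticipate is the existence of the limiting words in the two K\"onig's lemma steps, and in particular the Zorn step, where one must know that the intersection of a chain of factor sets is still rich enough to be realized by a right-infinite word; finiteness of $\Sigma^n$ at each length is exactly what makes $F\cap\Sigma^n$ nonempty and rescues the construction. Equivalently, the whole statement can be read dynamically: ${\rm Fac}(u)$ determines a nonempty subshift, uniformly recurrent words correspond precisely to minimal subshifts, and the ``moreover'' clause is the classical fact that every nonempty subshift contains a minimal one.
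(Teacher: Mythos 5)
Your proof is correct, but it takes a genuinely different route from the paper's. The paper is much shorter because it invokes Furstenberg's theorem as a black box: for any $[u]\in{\rm Tot}(w)$ there is a uniformly recurrent word $v$ with ${\rm Fac}(v)\subseteq{\rm Fac}(u)$, which is precisely the ``moreover'' clause; the paper then proves, exactly as in your first paragraph, that uniformly recurrent classes are maximal, and the remaining implication (maximal implies uniformly recurrent) follows formally by combining the two, since a maximal $[v]$ sits below some uniformly recurrent $[v']$ and must equal it. Your argument reverses this dependency: you prove both implications of the equivalence directly --- the converse by a contrapositive K\"onig's-lemma construction of an infinite word avoiding the witness factor $u$ --- and only then derive the ``moreover'' clause, via Zorn's lemma together with a second K\"onig/compactness argument showing that the intersection of a chain of factor sets still meets every $\Sigma^n$ and hence supports a right-infinite word. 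In effect you have re-proved Furstenberg's theorem (every subshift contains a minimal one) from first principles, as you yourself observe at the end. What the paper's approach buys is brevity; what yours buys is self-containedness, and it makes explicit the compactness mechanism --- finiteness of $\Sigma^n$ at each length, so that a chain of nonempty subsets of a finite set has nonempty intersection --- on which the cited theorem ultimately rests. All three of your constructions (the two K\"onig steps and the Zorn step) are sound as written.
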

\begin{proof}
Let $[u]\in {\rm Tot}(w)$.  Then by Furstenberg's theorem \cite{Fur} (see also \cite[Exercise 2, p. 337]{AS} and see \cite[Theorem 4.4.9]{B1} for an algebraic proof), there is some uniformly recurrent word $v$ over $\Sigma$ such that ${\rm Fac}(v)\subseteq {\rm Fac}(u)$ and so $[u]\preceq [v]$.  Thus it suffices to prove that elements of ${\rm URec}(w)$ are maximal elements of ${\rm Tot}(w)$.  To this end, let $[v]\in {\rm Urec}(w)$ and let $[v']\in {\rm Urec}(w)$ with 
${\rm Fac}(v')\subseteq {\rm Fac}(v)$. Let $z\in {\rm Fac}(v)$.  Since $v$ is uniformly recurrent, there is some natural number $N$ such that all factors of $v$ of length $N$ have $z$ as a subfactor.  In particular, $z$ is a factor of every length $N$ factor of $v'$ and so $z\in {\rm Fac}(v')$ and so ${\rm Fac}(v)={\rm Fac}(v')$, which shows that $[v]$ is maximal in ${\rm Tot}(w)$.
\end{proof}
\section{The radical of a word}
A quantity that appears to be particularly useful in understanding the structure of infinite words is a subset of the collection of factors of a word that we call the \emph{radical} of a word, due to its connection with the nil radical in ring theory.
\begin{definition}
Let $\Sigma$ be a finite alphabet and let $w$ be a right-infinite word over $\Sigma$.  We define the \emph{radical} of $w$ to be the set of factors $z$ of $w$ with the property that for every finite set $\{y_1,\ldots ,y_n\}$ of factors of $w$, each of which has $z$ as a subfactor, there is some natural number $N=N(y_1,\ldots ,y_n)$ such that $y_{i_1}\cdots y_{i_N}\in I(w)$ for every $(i_1,\ldots ,i_N)\in \{1,\ldots ,n\}^N$.
\end{definition}
It is immediate from the definition that if $z\in {\rm Rad}(w)$ then so is every factor of $w$ that contains $z$.  In particular, the set ${\rm Rad}(w)\cup \{{\bf 0}\}$ is an ideal of the factor monoid.
\begin{example}
Let $\Sigma=\{x,y\}$ and let $w=xyx^2yx^4y x^8yx^{16}y\cdots $.  Then ${\rm Rad}(w)$ is the set of factors of $w$ that contain $y$. 
\end{example}
\begin{proof}  Let $z_1,\ldots ,z_n$ be factors of $w$ that contain $y$ and let $M=\max(|z_1|,\ldots ,|z_n|)$.  Then a concatenation of $4^M$ elements of $\{z_1,\ldots ,z_n\}$ has length at least $4^M$ and length at most $M\cdot 4^M$.  Notice that the last $2M$ letters of this concatenation must have at least two copies of $y$, but this is impossible, $xyx^2y \cdots x^{2^M}y$ is a prefix of $w$ of length $<4^M-2M$ and all subsequent occurrences of $y$ occur at least $2^{M+1}>2M$ positions apart.  It follows that every factor of $w$ containing $y$ is in ${\rm Rad}(w)$.  On the other hand, a factor of $w$ not containing $y$ is of the form $x^i$ and since arbitrarily large powers of $x$ occur in $w$, we obtain the claim.
\end{proof}
The following result gives a useful characterization of the radical of a word.
\begin{proposition} Let $w$ be a right-infinite word over a finite alphabet $\Sigma$.  Then $z$ is in ${\rm Rad}(w)$ if and only if for every natural number $n\ge |z|$ there is some $N=N(n)$ such that all factors of $w$ of length at least $N$ have a subfactor of length $n$ that does not contain $z$ as a factor.
\end{proposition}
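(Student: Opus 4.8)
The plan is to prove the two implications separately, in each case arguing by contradiction and exploiting the interplay between the two finiteness conditions. The radical condition controls \emph{disjoint} concatenations of $z$-containing blocks, whereas the stated condition controls \emph{windows} inside a single long factor. The elementary bridge between them is this: a concatenation of blocks, each of length at most $M$ and each containing $z$, cannot avoid $z$ on any window of length $\ge 2M$, while conversely the overlapping windows covering a single long factor can be thinned to non-overlapping ones that genuinely concatenate.

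First I would establish the forward implication by contraposition: assuming the stated window condition fails, I produce a finite set of $z$-containing factors admitting arbitrarily long concatenations that are factors of $w$, witnessing $z\notin {\rm Rad}(w)$. If the condition fails there is a fixed $n\ge |z|$ and, for every $N$, a factor $u_N$ of $w$ with $|u_N|\ge N$ all of whose length-$n$ subfactors contain $z$. Let $\{y_1,\ldots,y_m\}$ be the finite set of \emph{all} length-$n$ factors of $w$ that contain $z$ (finite since $\Sigma$ is finite and the length is fixed). Reading $u_N$ in consecutive, non-overlapping blocks of length $n$ starting at positions $0,n,2n,\ldots$ exhibits a prefix of $u_N$ as a concatenation $y_{i_1}\cdots y_{i_k}$ with $k=\lfloor |u_N|/n\rfloor\to\infty$, each block lying in $\{y_1,\ldots,y_m\}$ and hence containing $z$. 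Thus for this finite family no threshold $N'$ can force all length-$N'$ products into $I(w)$, so $z\notin {\rm Rad}(w)$.

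For the converse I assume the window condition and take an arbitrary finite family $\{y_1,\ldots,y_n\}$ of factors of $w$ each containing $z$; set $M=\max_i|y_i|$ and apply the hypothesis with the parameter $2M$ (which is $\ge |z|$ since $M\ge|z|$) to obtain $N_0=N(2M)$, where we may take $N_0\ge 2M$. I then claim that if a product $C=y_{i_1}\cdots y_{i_k}$ is a factor of $w$ of length $\ge N_0$, then every length-$2M$ window of $C$ contains $z$: since the block boundaries of $C$ are spaced at most $M$ apart, any span of length $2M$ entirely contains at least one complete block $y_{i_j}$, and that block contains $z$. This contradicts the window condition, which forces a length-$2M$ subfactor of $C$ avoiding $z$. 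Hence no product of length $\ge N_0$ is a factor of $w$; since each block has length $\ge |z|\ge 1$, every product of at least $N_0$ blocks has length $\ge N_0$ and therefore lies in $I(w)$. Taking $N'=N_0$ verifies the radical condition for this family, so $z\in {\rm Rad}(w)$.

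The step I expect to require the most care is the block-counting argument in the converse: I must check precisely that a full interior block is always captured by a length-$2M$ window, including near the final block of $C$, which is exactly where the factor $2$ in $2M$ is used. In the forward direction the only subtlety is the passage from the overlapping windows that cover $u_N$ to a genuine non-overlapping concatenation; this is immediate once the blocks are chosen disjointly of the common length $n$, but it is essential to record that each such block is itself a $z$-containing length-$n$ factor of $w$ and hence lies in the prescribed finite set.
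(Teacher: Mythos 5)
Your proof is correct, and its two combinatorial kernels are the same as the paper's: cutting a long factor all of whose length-$n$ windows contain $z$ into consecutive length-$n$ blocks, so that it becomes a concatenation of $z$-containing factors of $w$; and the observation that any window of length $2M$ inside a concatenation of blocks of length at most $M$ must capture a complete block. The differences are in the packaging, with one genuine simplification. The paper proves the forward implication directly (applying the radical hypothesis to the finite set of length-$n$ factors containing $z$) and the converse by contraposition, whereas you flip both; logically this is immaterial. More substantively, in its contrapositive step the paper invokes K\"onig's infinity lemma to pass from arbitrarily long concatenations of the $z$-containing factors $a_1,\ldots ,a_p$ to a right-infinite word $v$ with ${\rm Fac}(v)\subseteq {\rm Fac}(w)$, every factor of which of length $C=2\max_i |a_i|$ contains $z$; the factors of $v$ then witness the failure of the window condition. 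You dispense with this compactness step entirely: the long finite concatenations themselves serve as witnesses, since your block-capture argument already shows that each of their length-$2M$ windows contains $z$. Your route is thus purely finitary and slightly more economical; the paper's route produces an actual infinite word realizing the obstruction, a construction in the same spirit as its other arguments (such as the density proof for the uniformly recurrent spectrum). Finally, the endpoint issue you flagged --- that a window near the final block of $C$ still swallows a complete block --- is exactly where the factor $2$ is spent, and your handling of it is sound.
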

\begin{proof}
Suppose that $z\in {\rm Rad}(w)$ and let $n\ge |z|$.  Let $u_1,\ldots ,u_m$ be the factors of $w$ of length $n$ that contain $z$ as a factor.  Then there is some $N$ such that every concatenation of $N$ elements from 
$\{u_1,\ldots ,u_m\}$ is not in ${\rm Fac}(w)$.  In particular, every factor of $w$ of length $Nn$ must have some subfactor of length $n$ that does not contain $z$ as a factor.  Conversely, suppose that $z$ is not in ${\rm Rad}(w)$.  Then there exist factors $a_1,\ldots, a_m$ of $w$ that contain $z$ as a factor such that there are arbitrarily long concatenations of elements from $\{a_1,\ldots ,a_p\}$ that are in ${\rm Fac}(w)$.  Let $T$ denote the collection of words over 
$\{a_1,\ldots ,a_p\}$ that are factors of $w$.  Then $T$ is a factor-closed infinite set and thus by K\"onig's infinity lemma there is some right-infinite word $v$ with ${\rm Fac}(v)\subseteq {\rm Fac}(w)$ and such that every factor of $v$ of length $C:=2\cdot \max(|a_1|,\ldots ,|a_p|)$ contains some $a_i$ (and hence $z$) as a subfactor.  Thus there does not exist a number $N$ such that every factor of $w$ of length at least $N$ has some subfactor of length $C$ that does not contain $z$ as a factor.  The result follows.
\end{proof}

\section{Topology on ${\rm Rec}(w)$}
In this section, we put a natural topology on the set ${\rm Rec}(w)$.  This definition should not be regarded as new, as it is really derived from the Zariski topology on a subspace of the prime spectrum of an associated noncommutative ring.  We nevertheless give a full proof that this proposed topology does indeed fulfill the requirements of being a topological space to illustrate the results. 

Let $w$ be a right-infinite word over a finite alphabet $\Sigma$.  For each subset $S$ of ${\rm Fac}(w)$ that is closed under the process of taking factors, we define a set 
\begin{equation}\mathcal{C}(S)  = \{ [u]\in {\rm Rec}(w) \colon {\rm Fac}(u) \subseteq S\}.\end{equation} 

 
\begin{proposition}  Let $\Sigma$ be a finite alphabet and let $w$ be a right-infinite word over $\Sigma$.
Then we have a topology on ${\rm Rec}(w)$ in which the closed subsets are precisely the sets of the form $\mathcal{C}(S)$ with $S\subseteq {\rm Fac}(w)$ a set closed under the process of taking factors. 
\end{proposition}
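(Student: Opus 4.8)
The plan is to verify directly the three closure axioms for the closed sets of a topology: that $\emptyset$ and ${\rm Rec}(w)$ both have the form $\mathcal{C}(S)$, that the collection of such sets is stable under arbitrary intersections, and that it is stable under finite unions. Since ${\rm Fac}(u)$ depends only on the class $[u]$, the condition ${\rm Fac}(u)\subseteq S$ is well-defined on equivalence classes, so each $\mathcal{C}(S)$ is a genuine subset of ${\rm Rec}(w)$. Throughout I would use the elementary fact that both the intersection and the union of any family of factor-closed subsets of ${\rm Fac}(w)$ are again factor-closed.

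First I would dispose of the two extreme cases. Taking $S={\rm Fac}(w)$ gives $\mathcal{C}({\rm Fac}(w))={\rm Rec}(w)$, because every $[u]\in{\rm Rec}(w)$ satisfies ${\rm Fac}(u)\subseteq{\rm Fac}(w)$ by the defining condition $[w]\preceq[u]$. Taking $S=\emptyset$ gives $\mathcal{C}(\emptyset)=\emptyset$, since the factor set of any right-infinite word is nonempty. For arbitrary intersections, given factor-closed sets $\{S_\alpha\}$, I would establish the set-theoretic identity $\bigcap_\alpha \mathcal{C}(S_\alpha)=\mathcal{C}\!\left(\bigcap_\alpha S_\alpha\right)$: a class $[u]$ lies in the left-hand side exactly when ${\rm Fac}(u)\subseteq S_\alpha$ for every $\alpha$, which is precisely the condition ${\rm Fac}(u)\subseteq\bigcap_\alpha S_\alpha$; and $\bigcap_\alpha S_\alpha$ is factor-closed, so the right-hand side is an admissible closed set.

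The crux is finite unions, where I claim $\mathcal{C}(S_1)\cup\mathcal{C}(S_2)=\mathcal{C}(S_1\cup S_2)$. The inclusion $\subseteq$ is immediate. For $\supseteq$ I would argue by contradiction: suppose $[u]\in\mathcal{C}(S_1\cup S_2)$ but $[u]$ lies in neither $\mathcal{C}(S_1)$ nor $\mathcal{C}(S_2)$, so there exist factors $a,b$ of $u$ with $a\notin S_1$ and $b\notin S_2$. The key point is that ${\rm Fac}(u)$ is directed under the ``common super-factor'' relation: since $u$ is a single right-infinite word, if $a$ occurs starting at position $i$ and $b$ starting at position $j$ with $i\le j$, then the factor $c$ of $u$ running from position $i$ to the end of that occurrence of $b$ contains both $a$ and $b$ as factors. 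Now $c\in{\rm Fac}(u)\subseteq S_1\cup S_2$, so $c\in S_1$ or $c\in S_2$; since $S_1,S_2$ are factor-closed, the former forces $a\in S_1$ and the latter forces $b\in S_2$, each a contradiction. Hence ${\rm Fac}(u)\subseteq S_1$ or ${\rm Fac}(u)\subseteq S_2$, as required.

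The main obstacle is precisely this last step, which is the analogue of the statement that the factor set of a word behaves like a prime (irreducible) object, so that containment in a union of two factor-closed sets collapses to containment in one of them. I expect the only subtlety to be stating the directedness of ${\rm Fac}(u)$ cleanly. I would also note as an aside that recurrence of $u$ plays no role in this particular verification, since directedness holds for every right-infinite word; recurrence is what makes ${\rm Rec}(w)$ the natural ambient space for the rest of the theory, but the same argument would equally topologize ${\rm Tot}(w)$.
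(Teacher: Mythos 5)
Your proof is correct, and on the decisive step it takes a genuinely different route from the paper's. Both arguments handle finite unions by producing, for a class $[u]$ with ${\rm Fac}(u)\subseteq S_1\cup\cdots\cup S_k$ but ${\rm Fac}(u)\not\subseteq S_i$ for each $i$, a single factor of $u$ containing all the witnesses $v_i\notin S_i$, and then contradicting factor-closedness. The difference is in how that factor is produced: the paper uses recurrence of $u$ to interleave the witnesses, obtaining a factor $v_1a_1v_2a_2\cdots a_{k-1}v_k$ of $u$, and explicitly flags this as ``where we need the use of recurrence''; you instead invoke directedness of ${\rm Fac}(u)$, which holds for \emph{every} right-infinite word since all the witnesses occur at finite positions and hence lie in a common prefix. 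Your observation is sound and buys something real: recurrence plays no role in this verification, so the same collection of sets $\mathcal{C}(S)$ equally topologizes ${\rm Tot}(w)$, a mild strengthening of the proposition (recurrence is of course still what makes ${\rm Rec}(w)$ the natural space elsewhere in the paper, e.g.\ for the density statements). Two small repairs to your write-up: first, your explicit super-factor $c$, running ``from position $i$ to the end of that occurrence of $b$,'' fails when the occurrence of $b$ is nested inside that of $a$ (then $c$ omits the tail of $a$); take instead the factor from position $\min(i,j)$ to position $\max(i+|a|,j+|b|)-1$, or simply a sufficiently long prefix of $u$, which handles any finite set of witnesses at once. Second, you argue unions only for $k=2$; the general finite case then needs a one-line induction, using the fact (which you already note) that a union of factor-closed sets is factor-closed. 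Neither point affects the substance of the argument.
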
 
\begin{proof} Notice that ${\rm Rec}(w)=\mathcal{C}({\rm Fac}(w))$ and $\emptyset = \mathcal{C}(\emptyset)$, and so the empty set and ${\rm Rec}(w)$ are closed sets.  Let $\{S_{\alpha}\colon \alpha\in J\}$ be a collection of subsets of ${\rm Fac}(w)$ in which each set is closed under the process of taking factors. Then
$$\bigcap_{\alpha\in J} \mathcal{C}(S_{\alpha}) = \mathcal{C}\left(\bigcap_{\alpha\in J} S_{\alpha}\right),$$ and this shows that our collection of sets is closed under arbitrary intersections.  We now show that our sets are closed under finite unions, which is where we need the use of recurrence in our words.  Let $S_1,\ldots ,S_k$ be a finite collection of subsets of 
${\rm Fac}(w)$ that are closed under the process of taking factors.  Then we claim that 
$\mathcal{C}(\bigcup S_i)=\bigcup_{i=1}^k \mathcal{C}(S_i)$.   To see this, suppose that $[v]\in \bigcup_{i=1}^k \mathcal{C}(S_i)$.  Then there is some $i$ such that every factor of $v$ is in $S_i$ and so $[v]\in \mathcal{C}(\bigcup S_i)$.  Conversely, if $[v]\in \mathcal{C}(\bigcup S_i)$, then every factor of $v$ lies in $S_1\cup \cdots \cup S_k$.
Now suppose that there does not exist an $i$ such that ${\rm Fac}(v)\subseteq S_i$.  Then for each $i\in \{1,\ldots ,k\}$ there is some factor $v_i$ of $v$ such that $v\not \in S_i$.  Since $v$ is recurrent, there exist factors $a_1,a_2,\ldots ,a_k$ such that $v_1a_1 v_2 a_2 \cdots a_{k-1} v_k$ is a factor of $v$.  By definition this means $v_1a_1 v_2 a_2 \cdots a_{k-1} v_k\in S_i$ for some $i$, which is impossible, since $S_i$ is closed under the process of taking factors and $v_i\not\in S_i$.
\end{proof}
\begin{remark} The topology above is closely related to the Zariski topology.  Given a right-infinite word $w$ over a finite alphabet $\Sigma$, one can form the algebra $A_w$ as in Equation (\ref{eq:Aw}).  Then it is a classical result that the collection of prime ideals in a ring can be endowed with the Zariski topology.  An element $[v]\in {\rm Rec}(w)$ naturally corresponds to a prime ideal of $A$, by taking the image of ideal $I(v)$ in $A$.  Then if we identify ${\rm Rec}(w)$ with this distinguished subset of the prime ideals of $A_w$, then the topology on ${\rm Rec}(w)$ is exactly the subspace topology\footnote{Given a topological space $X$, a subset $Y$ inherits a topological structure from $X$ in which the open subsets of $Y$ are precisely the sets of the form $U\cap Y$ with $U$ an open subset of $X$; we call this topology the \emph{subspace topology}.} when we endow the set of prime ideals of $A_w$ with the Zariski topology.
\end{remark}
We recall that a subset $Y$ of a topological space $X$ is \emph{dense} if every closed subset of $X$ that contains $Y$ is necessarily all of $X$; equivalently, every non-empty open subset of $X$ intersects $Y$ non-trivially. 
\begin{remark} If $w$ is a recurrent word, then $[w]$ is a dense point of ${\rm Rec}(w)$; that is, a closed set that contains $[w]$ is necessarily all of ${\rm Rec}(w)$.  Equivalently $[w]$ is in every non-empty open subset of ${\rm Rec}(w)$.  
\label{rem:closedpt}
\end{remark}
\begin{remark} Lemma \ref{lem:22} gives that a point $[v]$ of ${\rm Rec}(w)$ is closed if and only if $[v]$ is uniformly recurrent. 
\end{remark}

Given a factor $z\in {\rm Fac}(w)$ we have a set 
\begin{equation}
\mathcal{U}(z)=\{[v]\in {\rm Rec}(w)\colon z\in {\rm Fac}(v)\}.
\end{equation}  Then
the complement of $U(z)$ is the set $\mathcal{C}(S_z)$, where $S_z$ is the factor-closed subset of ${\rm Fac}(w)$ consisting of factors of $w$ that do not have $z$ as a subfactor.  Thus $\mathcal{U}(z)$ is an open set, and we call sets of this form \emph{principal open subsets} of ${\rm Rec}(w)$.  Notice that every open subset of ${\rm Rec}(w)$ can be written as a union of principal open sets, since if $U$ is an open subset of ${\rm Rec}(w)$ then $U$ is the complement of $\mathcal{C}(S)$ for some factor-closed subset $S$ of ${\rm Fac}(s)$.  If $[v]\in U$, then by definition of $U$ there is some factor $z$ of $v$ that is not in $S$.  Then $\mathcal{U}(z)\subseteq U$ and it contains $[v]$.  Thus $U$ can indeed be expressed as a union of principal open subsets. 

The collection of maps from $\mathbb{N}$ to itself can be identified with $\mathbb{N}^{\mathbb{N}}$ via the correspondence 
$$f:\mathbb{N}\to \mathbb{N} \mapsto \{f(n)\}_{n\in \mathbb{N}}.$$  There are several natural topologies on $\mathbb{N}$, but since we wish to emphasize its structure as an ordered set, we use the poset topology (also called the Alexandrov topology), in which the non-empty open subsets of $\mathbb{N}$ are precisely the right-infinite rays $U_a:=\{a,a+1,a+2,\ldots\}$ for $a\ge 0$.  Then it is not difficult to check that $\mathbb{N}$ becomes a topological space with these sets, along with the empty set, as the open sets.  It is then natural to give $\mathbb{N}^{\mathbb{N}}$ the product topology, which is the unique topology that makes $\mathbb{N}^{\mathbb{N}}$ a product of copies of $\mathbb{N}$ in the category of topological spaces.  The open subsets of $\mathbb{N}^{\mathbb{N}}$ are then those sets that can be written as unions of sets $U_{a_1,\ldots ,a_s;b_1,\ldots ,b_s}$, where $s\ge 1$ and $a_1,\ldots ,a_s,b_1,\ldots ,b_s\in \mathbb{N}$ and
$U_{a_1,\ldots ,a_s;b_1,\ldots ,b_s}$ is the set of maps $f:\mathbb{N}\to \mathbb{N}$ such that $f(a_i)\ge b_i$ for $i=1,\ldots ,s$.  We call this the \emph{Alexandrov product} topology on $\mathbb{N}^{\mathbb{N}}$.  

We recall that a function $f$ from a topological space $X$ to another topological space $Y$ is \emph{continuous} if $f^{-1}(U)$ is an open subset of $X$ for every open subset $U$ of $Y$.  
\begin{theorem}
The factor complexity map $p:{\rm Rec}(w)\to \mathbb{N}^{\mathbb{N}}$, which sends 
$[v]\in {\rm Rec}(w)$ to the factor complexity function $p_v: {\mathbb{N}}\to \mathbb{N}$, is an order-reversing continuous map when $\mathbb{N}^{\mathbb{N}}$ is endowed with the Alexandrov product topology.
\label{thm:alex}
\end{theorem}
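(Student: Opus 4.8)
The plan is to treat the two assertions separately, dealing first with the (easier) order-reversing property and then reducing continuity to the subbasic open sets of the Alexandrov product topology, where the essential input is the finiteness of the alphabet.

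For the order-reversing claim, suppose $[v]\preceq [v']$ in ${\rm Rec}(w)$. By the definition of $\preceq$ in Equation~\eqref{eq:prec}, this means ${\rm Fac}(v)\supseteq {\rm Fac}(v')$, so for every $n$ each length-$n$ factor of $v'$ is also a length-$n$ factor of $v$, whence $p_v(n)\ge p_{v'}(n)$. Thus $p([v])\ge p([v'])$ in the pointwise order on $\mathbb{N}^{\mathbb{N}}$, and $p$ reverses the order. This observation also shows that preimages under $p$ of up-sets are down-sets of $({\rm Rec}(w),\preceq)$, which is consistent with—but not sufficient for—openness, a point I return to below.

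For continuity, I would exploit that the preimage operation commutes with arbitrary unions and with finite intersections. Since every open subset of $\mathbb{N}^{\mathbb{N}}$ is a union of the basic sets $U_{a_1,\ldots,a_s;b_1,\ldots,b_s}=\bigcap_{i=1}^{s}\{f\colon f(a_i)\ge b_i\}$, each of which is a finite intersection of the subbasic sets $\{f\colon f(a)\ge b\}$, it suffices to prove that $p^{-1}(\{f\colon f(a)\ge b\})$ is open in ${\rm Rec}(w)$ for all $a,b$. This preimage is exactly
$$\{[v]\in {\rm Rec}(w)\colon p_v(a)\ge b\},$$
the set of recurrent classes having at least $b$ distinct factors of length $a$. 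The key point is that, for a fixed length $a$, there are only finitely many words of length $a$ over $\Sigma$, so this condition is finitary: writing $\Sigma^a$ for the length-$a$ words, one has $p_v(a)\ge b$ if and only if there is a set $T\subseteq \Sigma^a$ with $|T|=b$ such that every $z\in T$ is a factor of $v$. Hence
$$\{[v]\in {\rm Rec}(w)\colon p_v(a)\ge b\}=\bigcup_{\substack{T\subseteq \Sigma^a\\ |T|=b}}\ \bigcap_{z\in T}\mathcal{U}(z).$$
Each $\mathcal{U}(z)$ is a principal open set; the inner intersections are finite intersections of open sets and hence open; and the outer union is finite since there are finitely many such $T$. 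Therefore the preimage is open, which establishes continuity.

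The step I expect to be the crux is the passage from monotonicity to genuine openness. Because the topology on ${\rm Rec}(w)$ is the Zariski-type topology rather than the full Alexandrov topology of the poset, not every down-set is open, so the order-reversing property alone does not deliver continuity. What rescues the argument is precisely the finiteness of $\Sigma^a$, which lets the down-set $\{[v]\colon p_v(a)\ge b\}$ be expressed as a \emph{finite} Boolean combination of principal open sets; this finitary description is the decisive feature and should be highlighted as the heart of the proof.
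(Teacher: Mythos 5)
Your proof is correct and follows essentially the same route as the paper's: both reduce continuity to (sub)basic open sets and then write the preimage $\{[v]\colon p_v(a)\ge b\}$ as a finite union, over size-$b$ sets $T$ of length-$a$ words, of the finite intersections $\bigcap_{z\in T}\mathcal{U}(z)$ of principal open sets. The only cosmetic differences are that you first split the basic sets into single-coordinate subbasic ones and index over all of $\Sigma^a$ rather than over the length-$a$ factors of $w$ (the extra terms $\mathcal{U}(z)$ with $z\notin{\rm Fac}(w)$ are empty, so nothing changes).
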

\begin{proof}
If $[v]\preceq [v']$ then ${\rm Fac}(v')\subseteq {\rm Fac}(v)$ and so $p_{v'}(n)\le p_v(n)$ for every $n$, which gives that $p$ is order-reversing. 
It suffices to show that the preimage of an open set of $\mathbb{N}^{\mathbb{N}}$ under $p$ is open.  Since open sets are unions of principal open sets, it considers to prove this for such sets.

 Consider 
$$V:=p^{-1}(U_{a_1,\ldots ,a_s;b_1,\ldots ,b_s}) = \{[v]\in {\rm Rec}(w)\colon p_v(a_i) \ge b_i~{\rm for~}i=1,\ldots ,s\}.$$
Let $S_i$ denote the set of distinct factors of $w$ of length $a_i$.  Then let $\mathcal{T}$ denote the finite collection of sets
$(T_1, T_2,\ldots ,T_s)$, where $T_i\subseteq S_i$ and $|T_i|=b_i$.  Then $[v]\in V$ if and only if there is some 
$(T_1,\ldots, T_s)\in \mathcal{T}$ such that each $T_i$ is contained in the set of distinct factors of $v$ of length $a_i$.  Thus 
$[v]$ is in the open set $$\mathcal{U}(T_1,\ldots ,T_s):=\bigcap_{z\in \bigcup_{i=1}^s T_i} \mathcal{U}(z).$$ 
Hence $V$ is the union of the open sets $\mathcal{U}(T_1,\ldots ,T_s)$ as we range over $(T_1,\ldots ,T_s)\in \mathcal{T}$, which is open.  
\end{proof} 
\begin{corollary} Let $\Sigma$ be a finite alphabet and let $w$ be a right-infinite word over $\Sigma$.  If $h:\mathbb{N}\to \mathbb{N}$ is a map then the set of $[v]\in {\rm Rec}(w)$ such that $p_v(n)< h(n)$ for every $n$ is a closed subset of ${\rm Rec}(w)$.
\end{corollary}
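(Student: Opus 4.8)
The plan is to deduce this directly from the continuity of the factor complexity map $p:{\rm Rec}(w)\to \mathbb{N}^{\mathbb{N}}$ established in Theorem~\ref{thm:alex}, by exhibiting the set in question as the complement of an open set. Write $Z=\{[v]\in {\rm Rec}(w)\colon p_v(n)<h(n)~\text{for every }n\}$. The first step is the elementary set-theoretic rewriting of its complement as
$$ {\rm Rec}(w)\setminus Z=\{[v]\colon p_v(n)\ge h(n)~\text{for some }n\}=\bigcup_{n\in\mathbb{N}}\{[v]\colon p_v(n)\ge h(n)\}. $$

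Next, for each fixed $n$ I would identify the set $\{[v]\colon p_v(n)\ge h(n)\}$ as the preimage under $p$ of the principal open set $U_{n;h(n)}=\{f\in\mathbb{N}^{\mathbb{N}}\colon f(n)\ge h(n)\}$ of the Alexandrov product topology; this is the set $U_{a_1,\ldots,a_s;b_1,\ldots,b_s}$ defined just before Theorem~\ref{thm:alex} in the case $s=1$, $a_1=n$, $b_1=h(n)$. Since $p$ is continuous, each such preimage is open, and a union of open sets is open. Hence ${\rm Rec}(w)\setminus Z$ is open and $Z$ is closed, as claimed.

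There is no real obstacle here; the argument is essentially immediate from continuity. The one point to keep straight is the order-theoretic orientation: because the nonempty open subsets of $\mathbb{N}$ in the Alexandrov topology are exactly the upward rays $U_a=\{a,a+1,\ldots\}$, the strict inequality $p_v(n)<h(n)$ defines a downward (hence closed) condition, which is why it is $Z$ and not its complement that comes out closed. The argument also degenerates gracefully in the boundary case $h(n)=0$, where $\{[v]\colon p_v(n)\ge 0\}$ is all of ${\rm Rec}(w)$ and therefore forces $Z=\emptyset$, consistent with $Z$ being closed.
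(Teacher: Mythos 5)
Your proof is correct and is essentially the paper's own argument, just phrased dually: the paper writes $Z$ as the intersection over $n$ of the preimages of the closed sets $\{g\colon g(n)<h(n)\}$ (each the complement of $U_{n;h(n)}$), while you take complements and write ${\rm Rec}(w)\setminus Z$ as the union of the preimages of the open sets $U_{n;h(n)}$; by De Morgan these are the same decomposition, both resting on the continuity of $p$ from Theorem~\ref{thm:alex}.
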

\begin{proof}
For each $i$, the set $\mathcal{G}_i$ of maps $g:\mathbb{N}\to \mathbb{N}$ with $g(i)<h(i)$ is a closed set of $\mathbb{N}^{\mathbb{N}}$, as it is the complement of the open set $U_{i;h(i)}$.  The preimage of $\mathcal{G}_i$ under the map $p$ given in the statement of Theorem \ref{thm:alex} is closed by Theorem \ref{thm:alex} and thus $\{[v]\in {\rm Rec}(w)\colon p_v(i)<h(i)\}$ is closed.  Since an intersection of closed sets is again closed, we obtain the result.
\end{proof}

The next result shows that when the radical of a word is trivial, the uniformly recurrent spectrum is dense in the recurrent spectrum.  This is a translation of a well known results about the structure of rings (in particular, results about radicals).  
\begin{proposition} Let $\Sigma$ be a finite alphabet and let $w$ be a right-infinite word over $\Sigma$.  Then ${\rm URec}(w)$ is a dense subset of $\mathcal{C}(S)$, where $S={\rm Fac}(w)\setminus {\rm Rad}(w)$.  In particular, if ${\rm Rad}(w)$ is empty, then ${\rm URec}(w)$ is dense in ${\rm Rec}(w)$.  
\label{prop:URecdense}
\end{proposition}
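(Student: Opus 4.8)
The plan is to verify density against the basis of principal open sets. Recall from the excerpt that every open subset of ${\rm Rec}(w)$ is a union of principal opens $\mathcal{U}(z)$, so the sets $\mathcal{U}(z)\cap \mathcal{C}(S)$ form a basis for the subspace topology on $\mathcal{C}(S)$, and to prove ${\rm URec}(w)$ dense in $\mathcal{C}(S)$ it suffices to show that every \emph{non-empty} basic open set $\mathcal{U}(z)\cap \mathcal{C}(S)$ contains a uniformly recurrent point. Before doing this I would record two preliminaries. First, since ${\rm Rad}(w)\cup\{{\bf 0}\}$ is an ideal of the factor monoid, ${\rm Rad}(w)$ is closed under passing to factors that contain a given element, so its complement $S={\rm Fac}(w)\setminus {\rm Rad}(w)$ is closed under taking subfactors and $\mathcal{C}(S)$ is a bona fide closed set. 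Second, I would check the containment ${\rm URec}(w)\subseteq \mathcal{C}(S)$: if $u$ is uniformly recurrent with ${\rm Fac}(u)\subseteq {\rm Fac}(w)$ and some factor $z$ of $u$ lay in ${\rm Rad}(w)$, then uniform recurrence gives an $M\ge |z|$ with every length-$M$ factor of $u$ containing $z$, while the characterization of the radical proved in Section 3 gives an $N$ such that every factor of $w$ (in particular every long enough factor of $u$) of length $\ge N$ has a length-$M$ subfactor avoiding $z$, a contradiction. Hence ${\rm Fac}(u)\subseteq S$.

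With these in hand, fix $z$ with $\mathcal{U}(z)\cap \mathcal{C}(S)\neq \emptyset$. Any witnessing point $[v']$ has $z\in {\rm Fac}(v')\subseteq S$, so $z\notin {\rm Rad}(w)$. I would then reuse the construction from the proof of the characterization of ${\rm Rad}(w)$: because $z\notin {\rm Rad}(w)$ there are factors $a_1,\ldots ,a_p$ of $w$, each containing $z$, admitting arbitrarily long concatenations that lie in ${\rm Fac}(w)$, and K\"onig's lemma produces a right-infinite word $v$ with ${\rm Fac}(v)\subseteq {\rm Fac}(w)$ in which every factor of length $C:=2\cdot\max(|a_1|,\ldots ,|a_p|)$ contains some $a_i$, hence contains $z$.

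The crucial step is to apply Furstenberg's theorem (as in Lemma \ref{lem:22}) to $v$, obtaining a uniformly recurrent word $u$ with ${\rm Fac}(u)\subseteq {\rm Fac}(v)$. The point I expect to be the main obstacle — and the reason for passing to $v$ rather than applying Furstenberg to $w$ directly — is ensuring that $z$ survives into $u$, since a priori a uniformly recurrent subword of $w$ need not contain $z$. This is resolved precisely by the construction of $v$: as $u$ is right-infinite it has a factor of length $C$, this factor lies in ${\rm Fac}(v)$ and therefore contains $z$, so $z\in {\rm Fac}(u)$ and $[u]\in \mathcal{U}(z)$. Since $u$ is uniformly recurrent with ${\rm Fac}(u)\subseteq {\rm Fac}(w)$, we get $[u]\in {\rm URec}(w)\subseteq \mathcal{C}(S)$, so $[u]\in \mathcal{U}(z)\cap \mathcal{C}(S)\cap {\rm URec}(w)$, establishing density. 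The special case is then immediate: if ${\rm Rad}(w)=\emptyset$ then $S={\rm Fac}(w)$ and $\mathcal{C}(S)={\rm Rec}(w)$, so ${\rm URec}(w)$ is dense in ${\rm Rec}(w)$.
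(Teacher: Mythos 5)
Your proof is correct and follows essentially the same route as the paper's: the same containment ${\rm URec}(w)\subseteq\mathcal{C}(S)$ via the ideal property and characterization of the radical, the same reduction of density to principal open sets $\mathcal{U}(z)$, and the same K\"onig-plus-Furstenberg construction of a right-infinite word in which $z$ occurs with bounded gaps, so that $z$ survives into the extracted uniformly recurrent word. If anything, yours is slightly leaner: the paper also chooses its witnesses $y_1,\ldots,y_d$ with $d$ minimal and proves none of them lies in ${\rm Rad}(w)$ (handling $d=1$ separately via a periodic point), a detour that its concluding K\"onig--Furstenberg step never actually uses, whereas you proceed directly and also make explicit why $z\notin{\rm Rad}(w)$ follows from non-emptiness of $\mathcal{U}(z)\cap\mathcal{C}(S)$.
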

\begin{proof} Recall ${\rm Rad}(w)\cup \{{\bf 0}\}$ is an ideal of the factor monoid and so $S:={\rm Fac}(w)\setminus {\rm Rad}(w)$ is a factor-closed set.  Thus $\mathcal{C}(S)$ is a closed subset of ${\rm Rec}(w)$.  We first claim that if $v$ is uniformly recurrent then $[v]\in \mathcal{C}(S)$.  
To see this, observe that if $v\not\in \mathcal{C}(S)$ then $v$ has a factor $z$ that is in ${\rm  Rad}(w)$.  Since $v$ is uniformly recurrent, there is some $N$ such that every factor of $v$ of length $N$ contains $z$ as a factor.  Let $u_1,\ldots ,u_m$ denote the factors of $v$ of length $N$.  Then by definition of the radical, there is some $M$ such that $u_{i_1}\cdots u_{i_M}\in I(w)\subseteq I(v)$ for every $i_1,\ldots ,i_M\in \{1,\ldots ,m\}$.  But this is a contradiction, since $v$ is a right-infinite word over the alphabet $\{u_1,\ldots ,u_m\}$.  Thus ${\rm URec}(w)\subseteq \mathcal{C}(S)$. 

Now suppose that ${\rm URec}(w)$ is not dense in $\mathcal{C}(S)$.  Then there is an open set $U$ of ${\rm Rec}(w)$ that intersects $\mathcal{C}(S)$ non-trivially and which intersects ${\rm URec}(w)$ trivially.   Since open sets of ${\rm Rec}(w)$ are unions of principal open sets, we may assume without loss of generality that there is some factor $z$ of $w$ such that $U=\mathcal{U}(z)$.  
Now let $\mathcal{T}$ denote the set of factors of $w$ that contain $z$ as a subfactor and are not in ${\rm Rad}(w)$.  Then $\mathcal{T}$ is infinite since $z$ is not in the radical of $w$.  Now since $z$ is not in the radical of $w$, there are factors $y_1,\ldots ,y_d$ of $w$, each of which contains $z$ as a subfactor, such that there are arbitrarily long concatenations of $y_1,\ldots ,y_d$ that lie in ${\rm Fac}(w)$.  We choose such $y_1,\ldots ,y_d$ with $d$ minimal.  

If $d=1$ then by assumption $y_1^n$ is a factor of $w$ for every $n$ and hence $[y_1^{\omega}]\in U$, which contradicts the fact that $U$ contains no uniformly recurrent classes.  Thus we may assume $d\ge 2$.  Now we claim that $y_1,\ldots ,y_d$ are in $\mathcal{T}$.  To see this, suppose that some $y_i$ is not in $\mathcal{T}$. After reindexing, we may assume that $i=d$, and so $y_d\in {\rm Rad}(w)$.  Then by minimality of $d$, there is some $N$ such that all concatenations of $\{y_1,\ldots ,y_{d-1}\}$ of length $N$ lie outside of ${\rm Fac}(w)$.  Thus every concatenation of length $\ge N$ of elements of $\{y_1,\ldots ,y_d\}$ that lies in ${\rm Fac}(w)$ is a concatenation of elements from the finite set $\{uy_du'\}$ where $u$ and $u'$ range over (possibly empty) concatenations of $\{y_1,\ldots ,y_{d-1}\}$ of length at most $N-1$. Then since each $uy_du'$ is in the radical, we see there is some $M$ such that all $M$-fold concatenations of elements from $\{uy_du'\}$ lie outside of ${\rm Fac}(w)$, which contradicts our assumption that there are arbitrarily long concatenations of $y_1,\ldots ,y_d$ that lie in ${\rm Fac}(w)$.

Now we let $\mathcal{X}$ denote the set of factors of concatenations of elements from $\{y_1,\ldots ,y_d\}$ that lie in ${\rm Fac}(w)$.  Then $\mathcal{X}$ is factor-closed and infinite and so by K\"onig's infinity lemma there is some right-infinite word $v=y_{i_1}y_{i_2}\cdots $ such that ${\rm Fac}(v)\subseteq {\rm Fac}(w)$.  Moreover, since each $y_i$ contains $z$ as a factor, there is some $N$ such that every factor of $v$ of length $N$ contains $z$ as a subfactor.  By Furstenberg's theorem \cite{Fur}, there is some uniformly recurrent word $v'$ with ${\rm Fac}(v')\subseteq {\rm Fac}(v)$ and since $z$ is uniformly recurrent in $v$, $z$ is necessarily a factor of $v'$ and so $v'\in \mathcal{U}(z)$, a contradiction.  
The result follows.
 \end{proof}

\section{Words of linear factor complexity}
Given a right-infinite word $w$ over $\Sigma$, we let 
$p_w(n)$ denote the number of elements of ${\rm Fac}(w)$ of length $n$.  Then either $p_w(n)={\rm O}(1)$ or $p_w(n)\ge n+1$ for every $n$.  We now consider words $w$ with the property that there is some $C>1$ such that $n+1\le p_w(n)\le Cn$ for $n\ge 1$.  
Then for such words we have the following result, which shows that in some sense the recurrent spectrum of $w$ is well behaved.
\begin{theorem} Let $\Sigma$ be a finite word and let $w$ be a right-infinite word over $\Sigma$ of linear factor complexity.  Then the following hold:
\begin{enumerate}
\item lengths of chains in ${\rm Rec}(w)$ are uniformly bounded; 
\item if $w$ is recurrent then the union of factor sets of strictly larger recurrent words is proper; i.e.,
$$\bigcup_{\{[v]\in {\rm Rec}(w), [v]\neq [w]\}} {\rm Fac}(v) \subsetneq {\rm Fac}(w);$$
\item if $w$ is recurrent then $\#{\rm Per}(w)\le \limsup_{n\to\infty} p_w(n)/n$.
\end{enumerate}
\label{thm:trich}
\end{theorem}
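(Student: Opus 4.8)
Throughout, write $s_v(n):=p_v(n+1)-p_v(n)$ for a right-infinite word $v$. The observation that drives all three parts is the identity $s_v(n)=\sum_{|u|=n}(\deg^+_v(u)-1)$, the sum ranging over factors $u$ of $v$ of length $n$, where $\deg^+_v(u)$ counts the right extensions of $u$ lying in ${\rm Fac}(v)$ and $u$ is \emph{right-special} when $\deg^+_v(u)\ge 2$. Two consequences are immediate. First, if ${\rm Fac}(v')\subseteq{\rm Fac}(v)$ then every right-special factor of $v'$ is right-special in $v$ with no larger degree, so $s_{v'}(n)\le s_v(n)$ for all $n$; hence $\ell(v):=\limsup_n s_v(n)$ is non-increasing along $\prec$, and by Cassaigne's theorem \cite{Cass} it is a finite nonnegative integer with $\ell(v)\le\ell(w)=\limsup_n s_w(n)$ for every $[v]\in{\rm Rec}(w)$. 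Second, when $v$ is recurrent the Rauzy graph $G_n(v)$ (length-$n$ factors as vertices, length-$(n+1)$ factors as edges) is strongly connected, so right-special factors are exactly the vertices of out-degree $\ge 2$, and at a vertex of out-degree $1$ the unique out-edge is forced.

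For part (1) I would use $\ell(\cdot)$ as a height invariant. Along a strict chain $[v_0]\prec[v_1]\prec\cdots\prec[v_k]$ the factor sets strictly decrease, so $\ell(v_0)\ge\ell(v_1)\ge\cdots\ge\ell(v_k)\ge 0$ are integers in $[0,\ell(w)]$. It then suffices to prove the strict-drop lemma: if $v,v'$ are recurrent with ${\rm Fac}(v')\subsetneq{\rm Fac}(v)$ then $\ell(v')<\ell(v)$, which bounds every chain by $\ell(w)+1$. The plan is to read this off the Rauzy graphs: for all large $n$, $G_n(v')$ is a strongly connected proper subgraph of $G_n(v)$, and I would argue that recurrence of both words (strong connectivity of both graphs) forces the asymptotic cyclomatic number $s(\cdot)+1$ to drop, because a factor of $v$ absent from $v'$ recurs in $v$ and so removes an entire asymptotic family of right-special branchings rather than finitely many. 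This is exactly the step that fails for non-recurrent subsystems, where a single isolated defect adds linearly many factors without supporting a recurrent sub-word, so recurrence is essential here. I expect this strict-drop lemma, and in particular the control of $\limsup$ rather than $\liminf$, to be the main obstacle in part (1).

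Part (3) is the cleanest, and I would prove the equivalent statement that any $m$ distinct classes in ${\rm Per}(w)$ force $\limsup_n p_w(n)/n\ge m$. Fix periodic representatives with primitive periods $q_1,\dots,q_m$ and set $Q=2\max_j q_j$. For $n\ge Q$, the Fine–Wilf theorem shows the length-$n$ factor sets of the $m$ periodic words are pairwise disjoint (a common factor of length $\ge q_i+q_j$ would force a common period, hence equal factor sets), and each has exactly $q_j$ elements forming a vertex-disjoint directed cycle $C_j$ in the strongly connected graph $G_n(w)$. The standing hypothesis $p_w(n)\ge n+1$ makes $w$ aperiodic, so each $C_j$ must contain a right-special factor of $w$: otherwise every length-$n$ factor of the $j$-th periodic word would have a unique extension in ${\rm Fac}(w)$, namely its periodic one, and reading rightward from any occurrence would make $w$ eventually periodic and hence, being recurrent, periodic. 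Disjointness of the cycles makes these $m$ right-special factors distinct, so $s_w(n)\ge m$ for all $n\ge Q$, whence $p_w(n)\ge mn-O(1)$ and $\limsup_n p_w(n)/n\ge m$. Letting $m\to\#{\rm Per}(w)$ yields the bound and, since the left-hand side is at most $C$, the finiteness of ${\rm Per}(w)$.

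For part (2) I would first reformulate the claim topologically: the assertion $\bigcup_{[v]\neq[w]}{\rm Fac}(v)\subsetneq{\rm Fac}(w)$ is precisely that the dense point $[w]$ is also open, i.e. that $\mathcal{U}(z)=\{[w]\}$ for some factor $z$. The mechanism I would exploit is the finite-union property of the topology on ${\rm Rec}(w)$, namely the identity $\mathcal{C}(\bigcup_i S_i)=\bigcup_i\mathcal{C}(S_i)$ for finitely many factor-closed sets $S_i$, whose proof used recurrence: if ${\rm Fac}(w)$ could be covered by finitely many maximal proper recurrent factor sets ${\rm Fac}(v_1^*),\dots,{\rm Fac}(v_r^*)$, then $[w]\in\mathcal{C}({\rm Fac}(w))=\mathcal{C}(\bigcup_i{\rm Fac}(v_i^*))=\bigcup_i\mathcal{C}({\rm Fac}(v_i^*))$, forcing ${\rm Fac}(w)\subseteq{\rm Fac}(v_i^*)$ for some $i$ and contradicting properness. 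Part (1) supplies the ascending chain condition on factor sets needed to replace an arbitrary proper recurrent $v$ by a maximal one dominating it, so the argument reduces to showing there are only \emph{finitely many} maximal proper recurrent classes. Establishing that finiteness from linear complexity, presumably by using the at most $\ell(w)$ right-special factors at each length to bound the number of ways a recurrent subsystem can sit maximally inside ${\rm Fac}(w)$, is the step I expect to be the main obstacle, and it is where the hypothesis $p_w(n)\le Cn$ does the essential work.
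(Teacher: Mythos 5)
Your part (3) is correct and takes a genuinely different route from the paper. The paper argues by contradiction, building for each large $n$ an explicit family of factors $z_{n,i}(p,p+n)$ from $C+1$ hypothetical periodic classes and showing these are pairwise distinct, forcing $p_w(n)>Cn$. You instead place the $m$ periodic words inside the Rauzy graph $G_n(w)$ as vertex-disjoint cycles (disjointness via Fine--Wilf and primitivity), and observe that a cycle with no $w$-right-special vertex would make $w$ eventually periodic, contradicting $p_w(n)\ge n+1$. This gives $p_w(n+1)-p_w(n)\ge m$ for all large $n$, hence in fact $\liminf_n p_w(n)/n\ge \#{\rm Per}(w)$, which is slightly stronger than the stated bound and cleaner than the paper's counting; note that it only uses aperiodicity of $w$, not recurrence.

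Parts (1) and (2), however, are plans whose decisive steps are missing, as you yourself flag, and in both cases the missing ingredient is exactly the estimate the paper proves as Lemma \ref{lem:est}: for an aperiodic recurrent word $v$ of linear complexity and any factor $u$, the number $p_v(n;u)$ of length-$n$ factors of $v$ containing $u$ is at least $n+1-|u|$ for large $n$. For (1), your invariant $\ell(v)=\limsup_n\bigl(p_v(n+1)-p_v(n)\bigr)$ is, I believe, the wrong one. What the hypotheses actually yield (the paper's Corollary \ref{cor:a}, via Lemma \ref{lem:est}) is a drop in $\limsup_n p_v(n)/n$: deleting any factor $u$ deletes at least $n-{\rm O}(1)$ factors of each large length, so along a strict inclusion of recurrent aperiodic factor sets this density drops by at least $1$, bounding chains by $1+\limsup_n p_w(n)/n$ (Corollary \ref{cor:acc}). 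Your $\ell$ sits on the wrong side of the inequality $\limsup_n p_v(n)/n\le \ell(v)$, so a drop in the density does not transfer to a drop in $\ell$; and your heuristic about removing ``an asymptotic family of right-special branchings'' controls branching at infinitely many lengths (a $\liminf$-type statement), not at all large lengths. Nothing you say rules out a recurrent $v'$ with $p_{v'}(n)/n\to 1$ but two right-special factors at infinitely many lengths, sitting inside a recurrent $v$ whose increments are eventually at most $2$: that would satisfy the theorem yet violate your strict-drop lemma. The repair is to abandon $\ell$ and run the chain argument on $\limsup p/n$, which requires proving the $p_v(n;u)\ge n+1-|u|$ estimate; that estimate, not the chain bookkeeping, is the real content.

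For (2), your topological reduction is sound: it is the same finite-union mechanism used in the paper to show the $\mathcal{C}(S)$ form the closed sets of a topology, and part (1) supplies the chain condition needed to pass to maximal proper recurrent factor sets. But this shifts all the work onto the finiteness of the set of maximal proper recurrent classes, which you leave open, and right-special counting alone will not deliver it: two maximal proper subsystems can share almost all of their factors, so their right-special factors need not be distinct inside $w$. The finiteness again comes from Lemma \ref{lem:est}, by the argument of Lemma \ref{lem:OK}: for each maximal $[v_i]$ pick, using recurrence of $v_i$, a factor $y_i\in{\rm Fac}(v_i)\setminus\bigcup_{j\ne i}{\rm Fac}(v_j)$; then the sets of length-$n$ factors of the $v_i$ containing $y_i$ are pairwise disjoint and each of size at least $n-{\rm O}(1)$, so there are at most $C$ aperiodic such classes (the periodic ones being finite by part (3)). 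The paper's own proof of (2) is different and more direct: it produces a single factor $u'$ of $w$ lying in no proper recurrent subsystem, by a density argument on $\alpha_u:=\liminf_n p_w(n;u)/n$. So your outline for (2) can be completed, but both of your gaps trace back to the one estimate you never prove.
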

\begin{remark} A subset $V$ of a topological space $X$ is \emph{locally closed} if $V$ is the intersection of an open subset and a closed subset of $X$.  Then Theorem \ref{thm:trich} says that points in ${\rm Rec}(w)$ are locally closed.  To see this, observe that if $v\in {\rm Rec}(w)$, then $[v]$ is either periodic, in which case it is closed (and hence locally closed), or it has linear factor complexity, in which case
$$Y:=\bigcup_{\{[u]\in {\rm Rec}(v), [u]\neq [v]\}} {\rm Fac}(u) \subsetneq {\rm Fac}(v).$$
Then since every recurrent word whose factors are strictly contained in ${\rm Fac}(v)$ has factors contained in $Y$, we have $$\{[v]\} = \mathcal{C}(Y)^{c}\cap \mathcal{C}({\rm Fac}(v)).$$  
\end{remark}
In fact, finiteness of ${\rm Per}(w)$ holds for every word of linear factor complexity \cite{BS1}, without requiring the assumption that $w$ be recurrent, although the bound will not, in general, be as good as the one we obtain in the recurrent case.  

We begin with a lemma.  We let $\prec$ denote strict inequality and $\preceq$ to denote non-strict inequality of equivalence classes.  Given a factor $u$ of a right-infinite word $w$, we let $p_w(n;u)$ denote the number of factors of $w$ of length $n$ that contain $u$. We use an estimate that is a translation of estimates from \cite{BSmok}.

\begin{lemma} \label{lem:est} 
Let $w$ be a right-infinite aperiodic recurrent word with linear factor complexity and let $u$ be a factor of $w$.  Then $p_w(n;u)\ge n+1-|u|$ for all sufficiently large $n$.  
\end{lemma}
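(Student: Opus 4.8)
The plan is to control the first differences of the counting function $a_n := p_w(n;u)$ and then telescope. Write $m=|u|$. A word of length $m$ contains $u$ if and only if it equals $u$, so $a_m=1$, and it therefore suffices to prove
$$a_{n+1}-a_n\ge 1\qquad\text{for every }n\ge m,$$
since summing gives $a_n=a_m+\sum_{k=m}^{n-1}(a_{k+1}-a_k)\ge 1+(n-m)=n+1-|u|$ for all $n\ge m$, in particular for all sufficiently large $n$.

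To bound a single first difference, I would classify the length-$(n+1)$ factors of $w$ that contain $u$ according to whether their length-$n$ prefix already contains $u$. Those whose prefix contains $u$ are precisely the one-letter right extensions of length-$n$ factors containing $u$; as each factor of the infinite word has at least one right extension and distinct prefixes give distinct factors, there are at least $a_n$ of them. Every remaining such factor contains $u$ while its length-$n$ prefix does not, which forces $u$ to appear as a suffix and nowhere earlier; write $b_{n+1}$ for the number of these. The two families are disjoint, so I obtain the key inequality $a_{n+1}\ge a_n+b_{n+1}$.

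It then remains to show the difference is at least $1$, and here I split into two cases for each fixed $n\ge m$. First, suppose some factor $x$ of $w$ of length $n$ avoids $u$. Using recurrence, I pick an occurrence of $x$ far out in $w$, say on $[s,s+n-1]$, and let $[p,p+m-1]$ be the leftmost occurrence of $u$ with $p\ge s$; because $x$ avoids $u$ one checks $p\ge s+n-m+1$, so the window $V:=w[p-(n+1-m),\,p+m-1]$ has length $n+1$, lies inside $w$, ends in $u$, and (since $p$ is leftmost) contains no earlier occurrence of $u$. Hence $b_{n+1}\ge 1$ and $a_{n+1}-a_n\ge 1$. In the complementary case every length-$n$ factor contains $u$; then so does every length-$(n+1)$ factor, since its length-$n$ prefix already does, so $a_n=p_w(n)$ and $a_{n+1}=p_w(n+1)$. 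As $w$ is recurrent and aperiodic it is not eventually periodic, so $p_w$ is strictly increasing by the Morse--Hedlund theorem, giving $a_{n+1}-a_n=p_w(n+1)-p_w(n)\ge 1$ once more.

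I expect the only genuine obstacle to be the avoider case, and specifically the index bookkeeping guaranteeing that the constructed window $V$ lies inside $w$ and really contains no earlier copy of $u$: this is exactly where recurrence enters (to push the occurrence of $x$ far enough right that $p-(n+1-m)\ge 1$) and where the inequality $p\ge s+n-m+1$ must be verified carefully from the fact that $x$ avoids $u$. Everything else is routine telescoping together with the standard strict monotonicity of $p_w$ for aperiodic words. I note that this approach in fact delivers the bound for \emph{all} $n\ge|u|$ and uses only recurrence and aperiodicity; the linear-complexity hypothesis, while the natural setting for the applications, is not needed for the estimate itself.
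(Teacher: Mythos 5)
Your proof is correct, and it takes a genuinely different route from the paper's. The paper fixes a target length and stratifies factors containing $u$ by the position of $u$ inside them: it introduces the sets $W_{p,q}$ of factors $aub$ with $|a|=p$, $|b|=q$, and runs a \emph{global} dichotomy on whether, for every $(p,q)$, some element of $W_{p,q}$ admits no rewriting $a'ub'$ with $|a'|<p$. In the good case these witnesses are pairwise distinct as $p$ varies, giving $n+1$ factors of length $n+|u|$ containing $u$ outright; in the bad case a minimality argument shows $u$ sits within bounded distance of an end in every factor containing it, and recurrence then forces every sufficiently long factor of $w$ to contain $u$, whence $p_w(n;u)=p_w(n)\ge n+1$ by Morse--Hedlund. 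You instead telescope the first differences $a_{n+1}-a_n$ with a \emph{per-length} dichotomy: either some length-$n$ factor avoids $u$, in which case recurrence lets you manufacture a length-$(n+1)$ factor whose unique occurrence of $u$ is a suffix (one genuinely new factor beyond the right-extensions of the old ones), or every length-$n$ factor contains $u$ and Morse--Hedlund gives strict growth of $p_w$. The two arguments share the saturated branch and both use recurrence to place an occurrence of $u$ to the right of a chosen position, but your incremental bookkeeping replaces the paper's simultaneous position-witness count; it is more elementary, yields the bound for all $n\ge|u|$ rather than only sufficiently large $n$, and your observation that linear complexity plays no role in the estimate is accurate (it is not used in the paper's proof either). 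One small correction to your own commentary: recurrence is not needed to push the occurrence of $x$ far to the right --- your window automatically starts at position at least $s$, since $p\ge s+n-|u|+1$ gives $p-(n+1-|u|)\ge s$ --- it is needed to guarantee that $u$ occurs at all at or after position $s$.
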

\begin{proof}
For $p,q\ge 0$ we let
$W_{p,q}$ denote the set of factors of $w$ of the form $aub$ with $|a|=p$, $|b|=q$.
Suppose first that for each $(p,q)$ we have
$W_{p,q}$ is not contained in the union
$$\bigcup_{\{(p',q')\colon p'+q'=p+q, p'<p\}} W_{p',q'}.$$ Then for each $(p,q)\in \mathbb{N}^2$ we can pick a word $w_{p,q}$ in $W_{p,q}$ that is not in
$$\bigcup_{\{(p',q')\colon p'+q'=p+q, p'<p\}} W_{p',q'},$$ and so by construction the words $w_{p,q}$ are distinct and we have at least $n+1$ distinct factors of $w$ that contain $u$ of length $n+|u|$, which gives that 
$p_w(n;u) \ge n+1-|u|$ for $n\ge |u|$.  

Hence we may assume that there exists $(p,q)\in \mathbb{N}^2$ such that 
$W_{p,q}$ is contained in the union of $W_{p',q'}$ as $(p',q')$ ranges over the set with $p'<p$ and $p'+q'=p+q$. We may assume without loss of generality that $p\ge q$.  We now claim that every factor of $w$ that contains $u$ can be expressed in the form $aub$ with $|a|< p$ or $|b|<q$. To see this, suppose that this is not the case and pick a factor $y=aub$ of $w$ that contains $u$ that is not of this form with $|a|$ minimal.  
Then $|a|\ge p$ and $|b|\ge q$. So we may write $a=a_1a_0$ and $b=b_0b_1$ with $|a_0|=p$ and $|b_0|=q$.  Since $a_0 u b_0\in W_{p,q}$, there exist $a_0'$ and $b_0'$ with $|a_0'|=p'<p$ such that $a_0ub_0 = a_0' u b_0'$.
So $aub = a_1 a_0' u b_0'b_1$.  But $|a_1a_0'|<|a|$, which contradicts the minimality of $|a|$ and so we obtain the claim.

We now show that in this case that all factors of $w$ of length at least $\max(p,q)+|u|$ must have $u$ as a factor.  To see this, let $y$ be a factor of $w$ of length $\ge \max(p,q)+|u|$. Since $w$ is recurrent, there exist words $a$ and $b$ such that $yauby$ is a factor of $w$.  But every factor of $w$ that contains $u$ can be written in the form $cud$ with $|c|<p$ or $|d|<q$ and since $y$ is both a prefix and suffix of $yauby$ and it has length at least $\max(p,q)+|u|$, we see $y$ must contain $u$ as a factor.  Thus $p_w(n;u) = p_w(n) \ge n+1$ whenever $n\ge \max(p,q)+|u|$.  The result follows.
\end{proof}
\begin{corollary} Let $w$ be a right-infinite recurrent word with linear factor complexity. If $[w]\prec [v]$ then
$$\limsup_{n\to\infty} p_v(n)/n \le  \left(\limsup_{n\to\infty} p_w(n)/n \right)-1.$$
\label{cor:a}
\end{corollary}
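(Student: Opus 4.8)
The plan is to combine the counting estimate of Lemma~\ref{lem:est} with the elementary observation that passing from $w$ to a strictly larger class $[v]$ deletes an entire family of factors, after first disposing of the periodic case. So I would begin by reducing to the situation in which $w$ is aperiodic. If $w$ is periodic then it is uniformly recurrent, so by Lemma~\ref{lem:22} the class $[w]$ is maximal in ${\rm Tot}(w)$; since the hypothesis $[w]\prec[v]$ exhibits an element $[v]\in{\rm Tot}(w)$ strictly above $[w]$, no such $[v]$ can exist and the statement holds vacuously (here $\limsup_n p_w(n)/n=0$ anyway). Hence I may assume $w$ is aperiodic, which is exactly the hypothesis needed to invoke Lemma~\ref{lem:est}.

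Next, since $[w]\prec[v]$ means ${\rm Fac}(v)\subsetneq{\rm Fac}(w)$, I would fix a factor $u$ of $w$ with $u\notin{\rm Fac}(v)$. The key point is that every length-$n$ factor of $v$ is a length-$n$ factor of $w$ (as ${\rm Fac}(v)\subseteq{\rm Fac}(w)$) that cannot contain $u$ as a subfactor, since ${\rm Fac}(v)$ is factor-closed and $u\notin{\rm Fac}(v)$. Writing $p_w(n;u)$ for the number of length-$n$ factors of $w$ containing $u$, this gives
\begin{equation*}
p_v(n)\ \le\ p_w(n)-p_w(n;u)\qquad\text{for all }n.
\end{equation*}
Feeding in Lemma~\ref{lem:est}, which applies because $w$ is aperiodic, recurrent, and of linear factor complexity, yields $p_w(n;u)\ge n+1-|u|$ for all sufficiently large $n$, whence $p_v(n)\le p_w(n)-n-1+|u|$ for large $n$ and therefore
\begin{equation*}
\frac{p_v(n)}{n}\ \le\ \frac{p_w(n)}{n}-1+\frac{|u|-1}{n}.
\end{equation*}
Since $(|u|-1)/n\to 0$, taking $\limsup_{n\to\infty}$ of both sides and using that $\limsup$ is additive against a convergent sequence gives the claimed bound $\limsup_n p_v(n)/n\le\bigl(\limsup_n p_w(n)/n\bigr)-1$.

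The real content of the argument is carried entirely by Lemma~\ref{lem:est}; what remains is the elementary counting inequality and the reduction away from the periodic case, so I do not expect a serious obstacle. The one point worth checking is consistency: the displayed inequalities must be compatible with $p_v(n)\ge 1$ (true for any right-infinite word), and this is automatic precisely because $u$ is a factor \emph{avoided} by $v$, so the family of length-$n$ factors of $w$ not containing $u$ is nonempty. In other words, the existence of $v$ is exactly what rules out the degenerate branch of Lemma~\ref{lem:est} in which every long factor of $w$ contains $u$, and no extra work is needed to handle it.
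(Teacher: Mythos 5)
Your proof is correct and takes essentially the same route as the paper's: fix a factor $u$ of $w$ avoided by $v$, apply Lemma~\ref{lem:est} to get $p_w(n;u)\ge n+1-|u|$ for large $n$, note that no length-$n$ factor of $v$ can contain $u$, and subtract before passing to the $\limsup$. The only difference is that you explicitly dispose of the periodic case (via Lemma~\ref{lem:22}) to justify the aperiodicity hypothesis of Lemma~\ref{lem:est}, a point the paper's proof leaves implicit --- a minor but welcome addition.
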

\begin{proof} Since $I(v)$ strictly contains $I(w)$, there is some $u\in \Sigma^*$ that is a factor of $w$ but not of $v$. Let $C$ denote the length of $u$. Then by Lemma \ref{lem:est}, for sufficiently large $n$, there are at least $n-C+1$ words of length $n$ in ${\rm Fac}(w)$ that contain $u$ as a factor.  Since none of these words can be factors of $v$, we have
$$p_w(n)\ge p_v(n)+n-C+1$$ for $n$ sufficiently large, and so for every $\epsilon>0$ we have
$p_w(n)/n \ge p_v(n)/n + 1-\epsilon$ for $n$ sufficiently large, which gives the result.
\end{proof}
We recall that a poset $(\mathcal{P},\le)$ satisfies the \emph{ascending chain condition} if whenever 
$$p_1\le p_2\le \cdots $$ is a chain of elements of $\mathcal{P}$, there is some $n$, which depends upon the chain, such that $p_n=p_{n+1}=\cdots $.  The descending chain condition is defined analogously.  The following result immediately gives Theorem \ref{thm:trich} (1).
\begin{corollary} \label{cor:acc} Let $w$ be a right-infinite word of linear factor complexity.  Then all chains in ${\rm Rec}(w)$ have length at most $1+\limsup_{n\to\infty} p_w(n)/n$.  In particular, ${\rm Rec}(w)$ satisfies both the ascending and descending chain conditions.
\end{corollary}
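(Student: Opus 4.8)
The plan is to use Corollary \ref{cor:a} as a one-step ``descent'' estimate: each strict move up a chain in ${\rm Rec}(w)$ forces the quantity $\limsup_n p_v(n)/n$ to drop by at least $1$, and since that quantity is bounded below by $0$ it can drop only finitely many times. To set this up I would first put $L := \limsup_n p_w(n)/n$, which is finite precisely because $w$ has linear factor complexity: from $p_w(n)\le Cn$ we get $L \le C < \infty$.

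Now suppose $[v_1]\prec [v_2]\prec \cdots \prec [v_k]$ is a chain of $k$ elements in ${\rm Rec}(w)$. First I would record that every $v_i$ is recurrent (immediate from the definition of ${\rm Rec}(w)$) and has linear factor complexity, since $[w]\preceq [v_i]$ gives ${\rm Fac}(v_i)\subseteq {\rm Fac}(w)$ and hence $p_{v_i}(n)\le p_w(n)\le Cn$ for all $n$. Consequently each consecutive pair $[v_i]\prec [v_{i+1}]$ satisfies the hypotheses of Corollary \ref{cor:a}, with $v_i$ playing the role of the recurrent base word, so
$$\limsup_n p_{v_{i+1}}(n)/n \;\le\; \limsup_n p_{v_i}(n)/n - 1.$$

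Telescoping these inequalities over $i=1,\ldots,k-1$ yields
$$\limsup_n p_{v_k}(n)/n \;\le\; \limsup_n p_{v_1}(n)/n - (k-1) \;\le\; L-(k-1),$$
where the final step uses $p_{v_1}(n)\le p_w(n)$. Since $p_{v_k}(n)\ge 0$ for every $n$, the left-hand side is nonnegative, so $0\le L-(k-1)$, i.e. $k\le 1+L = 1+\limsup_n p_w(n)/n$, which is exactly the asserted bound on the length of the chain. The ascending and descending chain conditions then follow at once: any ascending (respectively descending) chain is in particular a chain in the above sense, hence has at most $1+L$ distinct terms, and therefore must stabilize.

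Since the whole argument is just a telescoping of the single-step estimate, there is no genuine obstacle once Corollary \ref{cor:a} is available. The only points that require care are verifying that \emph{every} term of the chain inherits linear factor complexity (so that Corollary \ref{cor:a} is legitimately applicable at each step, not merely at the bottom), and observing that the limsup at the top of the chain is bounded below by $0$ — this nonnegativity is what converts the seemingly unbounded descent into a finite bound on the chain length.
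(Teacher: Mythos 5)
Your proof is correct and takes essentially the same route as the paper's: both iterate Corollary \ref{cor:a} along the chain (the paper phrases your telescoping as an induction, obtaining $\limsup_n p_{v_i}(n)/n\le C-i$) and then use nonnegativity of the limsup at the top of the chain to bound its length by $1+\limsup_n p_w(n)/n$. Your explicit check that every term of the chain is recurrent with linear factor complexity is left implicit in the paper, but it is the same argument.
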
 
\begin{proof}
Suppose that $[v_0]\prec [v_1]\prec \cdots \prec [v_m]$ is a strictly ascending chain in ${\rm Rec}(w)$.  Let $C=\limsup_{n\to\infty} p_w(n)/n$.  Then we must show that $m\le C$.
To see this, since ${\rm Fac}(v_0)\subseteq {\rm Fac}(w)$, we have $\limsup_{n\to\infty} p_{v_0}(n)/n\le C$.  Then since $v_1\in {\rm Rec}(v_0)$ and $[v_1]\succ [v_0]$,
$\limsup_{n\to\infty} p_{v_1}(n)/n\le C-1$ by Corollary \ref{cor:a}.  An induction argument using Corollary \ref{cor:a} then gives that $\limsup_{n\to\infty} p_{v_i}(n)/n\le C-i$ for $i=1,\ldots ,m$ and so
$0\le \limsup_{n\to \infty} p_{v_m}(n)/n \le C-m$, which gives the result.
\end{proof}
\begin{proof}[Proof of Theorem \ref{thm:trich} (2)] Let $C=\limsup_{n\to\infty} p_w(n)/n$.  For a factor $u$ of $w$, we let $p_w(n;u)$ denote the factors of $w$ of length $n$ that contain $u$.  Then since $w$ is recurrent, 
by Lemma \ref{lem:est} we have $p_w(n;u)\ge n+1-|u|$.
In particular, $\alpha_u:=\liminf (p_w(n;u)/n \in [1,C)$ for every factor $u$ of $w$.
Let $\alpha$ denote the infimum of all $\alpha_u$ with $u$ a factor of $w$.

Then we claim that if $\alpha_u<\alpha+1$ then $u$ cannot be a factor of $v$ when $[v]\in {\rm Rec}(w)$ is not periodic and $[v]\neq [w]$.  To see this, suppose that $[v]\in {\rm Rec}(w)$ is not periodic and $[v]\neq [w]$ and suppose that $u$ is a factor of $v$.  Pick $\epsilon \in (0,1+\alpha-\alpha_u)$.  Then from Lemma \ref{lem:est}
$p_v(n;u)\ge n+1-|u|$.  Pick a factor $y$ of $w$ that is not a factor of $v$.  Then since $w$ is recurrent, there is a factor of $w$ of the form $yau$ and this is not a factor of $v$.  Now consider the number of factors of $w$ of length $n$ that contain $u$,
$p_w(n;u)$. Since a factor of $w$ that contains $yau$ is never a factor of $v$, we have
 $$p_w(n,u) \ge p_v(n;u) + p_w(n,yau).$$ By definition of $\alpha$ we have $p_w(n,yau) \ge (\alpha-\epsilon)n$ for all sufficiently large $n$ and so
 $$p_w(n,u) \ge n+1-|u| + (\alpha-\epsilon)n,$$ which gives $\alpha_u \ge 1+\alpha-\epsilon > \alpha_u$, a contradiction.  The result follows.  
Thus we obtain the claim.  

Now we pick a factor $u'$ of $w$ with $\alpha_{u'}<\alpha+1$.  By \cite{BS1} there are only finitely many $[v]\in {\rm Per}(w)$ and since $w$ is recurrent we can then pick a factor $u''$ of $w$ that contains $u'$ as a factor and is not a factor of any of the elements of ${\rm Per}(w)$.  Then by construction, $\alpha_{u''} \le \alpha_{u'} < \alpha+1$ and thus $u''$ is not a factor of $v$ for $[v]\in {\rm Rec}(w)$ with $[v]\neq [w]$ and so we obtain the desired result.

\end{proof}
\begin{proof}[Proof of Theorem \ref{thm:trich} (3)]
Pick a positive integer $C$ such that $p_w(n) < Cn$ for $n$ sufficiently large.  We claim that $\# {\rm Per}(w)\le C$.  To see this, suppose that there are pairwise distinct $[u_1^{\omega}],\ldots ,[u_{C+1}^{\omega}]$ in ${\rm Per}(w)$.  Then we can pick a positive integer $D$ such that no cyclic permutation of 
$u_i^D$ is a factor of $u_j^{\omega}$ for $i\neq j$.  We may also assume that no $u_i$ is a power of a strictly shorter word.  Let $L$ be the maximum of the lengths of $u_1^D,\ldots ,u_{C+1}^D$.  Since $w$ has linear factor complexity and each of $u_1^{\omega},\ldots ,u_{C+1}^{\omega}$ have factor complexities that are ${\rm O}(1)$, there exists a factor $y$ of $w$ of length $>2L$ such that none of $u_1^D,\ldots ,u_{C+1}^D$ are factors of $y$.  Then since $w$ is recurrent, for each $n\ge D$ and each $i\in \{1,2,\ldots ,C+1\}$, we can find a factor $u_i^n ayb$ of $w$ with $|b|\ge n$.  In particular, since $u_i^D$ is not a factor of $y$, we see that $u_i^n ayb$ is necessarily of the form $u_i^{n'} b'$ with $n'\ge n$ and $b'$ not having $u_i^D$ as a prefix and $|b'|\ge n$.  In particular, $u_i^{n'} b'$ has a factor of the form 
$$z_{n,i}:=u_i^{\lfloor n/|u_i|\rfloor} c$$ with $|c|\ge n$ and $u_i$ not a prefix of $c$.  Then for each $1\le p<q\le |z_{n,i}|$, we let $z_{n,i}(p,q)$ denote the factor of $z_{n,i}(p,q)$ consisting of the word beginning at the $i$-th position of $z_n$ and ending at the $j$-th position.  Then we claim that
$$\{z_{n,i}(p,p+n)\colon i=1,\ldots ,C+1, 1\le p\le n-3L\}$$ are pairwise distinct.  By construction, these are factors of $w$ of length $n$ and so 
$p_w(n) \ge (C+1)(n-3L) > Cn$ for $n$ sufficiently large, so once we have established the claim, we get a contradiction and obtain the result.

Notice that the first $n-|u_i|$ letters of $z_{n,i}$ is a prefix of $u_i^{\omega}$ and so if $p\le n-3L\le n-2L-|u_i|$, then 
$z_{n,i}(p,p+n)$ has a prefix of length $\le 2L$ that contains $u_i^D$ and does not contain $u_j^D$ for $j\neq i$.  In particular, if $i\neq j$ then
$z_{n,i}(p,p+n)\neq z_{n,j}(p',p'+n)$ for $p,p'\le n-3L$.  Thus it suffices to show that for a fixed $i$ the words $z_{n,i}(p,p+n)$ with $1\le p \le n-3L$ are pairwise distinct.  So suppose that $z_{n,i}(p,p+n)=z_{n,i}(p',p'+n)$ for some $p,p'$ with $1\le p<p'\le n-3L$.  Since some cyclic permutation of $u_i$ is  prefix of $z_{n,i}(p,p+n)$ for $p\le n-3L$, and since the cyclic permutation depends upon $p~(\bmod~|u_i|)$ and since all cyclic permutations of $u_i$ are distinct, we see that $p'\equiv p~(\bmod ~|u_i|)$.  Then there is some $r\in \{0,\ldots ,|u_i|-1\}$ such that $p=|u_i| m_0 + 1+r$ and $p'=|u_i| m_1 +r$ with $0\le m_0<m_1$.  Let $a$ denote the last $|u_i|-r$ letters of $u_i$.  Then 
by construction $z_{n,i}(p,p+n) = a u_i^{\lfloor n/|u_i|\rfloor - m_0-1}c_0$ for some prefix $c_0$ of $c$ and
$z_{n,i}(p',p'+n) = a u_i^{\lfloor n/|u_i|\rfloor - m_1-1}c_1$ for some prefix $c_1$ or $c$.  Then since $m_1>m_0$ we see that $u_i$ must be a prefix of $c_1$, a contradiction.  The result follows.
\end{proof}
\begin{lemma} \label{lem:OK1}
Let $s$ be a positive integer and suppose that $v_1,\ldots ,v_s$ are right-infinite recurrent words over a common finite alphabet such that $[v_i]\npreceq [v_j]$ for $i\neq j$. Then there are factors $y_1,\ldots ,y_s$ of $v_1,\ldots ,v_s$ respectively such that $y_i$ is not a factor of $v_j$ for $i\neq j$.
 \end{lemma}
\begin{proof}
By hypothesis, for each $i\neq j$ there is some factor $z_{i,j}$ of $v_i$ that is not a factor of $v_j$.  Then there is necessarily a finite prefix $y_i$ of $v_i$ that contains each $z_{i,j}$ as a subfactor for $j\neq i$.
Then by construction $y_i$ is a factor of $v_i$ but is not a factor of $y_j$ for $j\neq i$.
\end{proof}

\begin{lemma} \label{lem:OK}
Let $w$ be a word with $p_w(n)\le Cn$. Then the poset ${\rm Rec}(w)$ has at most $C$ non-periodic minimal elements.  \end{lemma}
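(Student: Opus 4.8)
The plan is to show that any collection of $m$ pairwise distinct non-periodic minimal elements of ${\rm Rec}(w)$ must satisfy $m \le C$, which immediately yields the claim. So fix distinct non-periodic minimal classes $[v_1],\ldots,[v_m]$ in ${\rm Rec}(w)$. The first observation is that distinct minimal elements of a poset are pairwise incomparable: if $[v_i] \preceq [v_j]$ with $i \neq j$, then $[v_i] \prec [v_j]$ by antisymmetry of $\preceq$, contradicting the minimality of $[v_j]$. Unwinding the definition of $\preceq$, incomparability of $[v_i]$ and $[v_j]$ means that neither factor set contains the other; in particular, for each ordered pair $i \neq j$ we may fix a factor $z_{ij} \in {\rm Fac}(v_i) \setminus {\rm Fac}(v_j)$. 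Note also that each $v_i$ is recurrent (it lies in ${\rm Rec}(w)$), non-periodic by hypothesis, and has linear factor complexity, since ${\rm Fac}(v_i) \subseteq {\rm Fac}(w)$ forces $p_{v_i}(n) \le p_w(n) \le Cn$; hence each $v_i$ is an aperiodic recurrent word of linear factor complexity and Lemma \ref{lem:est} applies to it.

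The key step is to upgrade the pairwise witnesses $z_{ij}$ into a single factor $u_i$ of $v_i$ that is absent from every other $v_j$. Since $v_i$ is recurrent, I can amalgamate finitely many of its factors into one: exactly as in the proof that the sets $\mathcal{C}(\cdot)$ are closed under finite unions, recurrence produces a factor of $v_i$ of the form $z_{ij_1} a_1 z_{ij_2} a_2 \cdots a_{m-2}\, z_{ij_{m-1}}$, where $j_1,\ldots,j_{m-1}$ enumerate the indices different from $i$. Call this factor $u_i$; it is a factor of $v_i$ by construction, and for each $j \neq i$ it contains $z_{ij}$ as a subfactor. Consequently $u_i$ cannot be a factor of $v_j$: if it were, then its subfactor $z_{ij}$ would be a factor of $v_j$, contradicting the choice of $z_{ij}$. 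Thus I obtain factors $u_1,\ldots,u_m$ with $u_i \in {\rm Fac}(v_i)$ and $u_i \notin {\rm Fac}(v_j)$ for all $j \neq i$.

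Now set $A_i(n)$ to be the set of length-$n$ factors of $v_i$ that contain $u_i$, so that $|A_i(n)| = p_{v_i}(n; u_i)$. These sets are pairwise disjoint: if some word $f$ lay in $A_i(n) \cap A_j(n)$ with $i \neq j$, then $f$ would be a factor of $v_j$ containing $u_i$, forcing $u_i \in {\rm Fac}(v_j)$, which we have excluded. Since every $A_i(n)$ consists of factors of $w$ of length $n$, their disjointness gives $p_w(n) \ge \sum_{i=1}^m |A_i(n)| = \sum_{i=1}^m p_{v_i}(n; u_i)$. Applying Lemma \ref{lem:est} to each aperiodic recurrent $v_i$ and its factor $u_i$, there is a threshold beyond which $p_{v_i}(n; u_i) \ge n + 1 - |u_i|$ holds for every $i$ simultaneously, whence $p_w(n) \ge mn - K$ with $K = \sum_{i=1}^m (|u_i| - 1)$ for all large $n$. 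Combining with $p_w(n) \le Cn$ gives $mn - K \le Cn$ for all large $n$; dividing by $n$ and letting $n \to \infty$ yields $m \le C$, as desired.

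The only genuinely delicate point is the amalgamation step producing $u_i$, which is where the recurrence of $v_i$ is essential: without it one could not guarantee a single factor witnessing difference from all the other classes at once, and the clean disjointness of the sets $A_i(n)$ (hence the additivity of the count) would break down. Everything else is bookkeeping layered on top of the lower bound already supplied by Lemma \ref{lem:est}.
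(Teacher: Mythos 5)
Your proof is correct and takes essentially the same route as the paper's: pairwise incomparability of the minimal classes yields witness factors $u_i$, Lemma \ref{lem:est} gives the lower bound $p_{v_i}(n;u_i)\ge n+1-|u_i|$, and disjointness of the corresponding sets of length-$n$ factors forces $\sum_i p_{v_i}(n;u_i)\le p_w(n)\le Cn$, hence at most $C$ such classes. The only difference is presentational: you spell out, via recurrence, the amalgamation producing a single factor $u_i$ of $v_i$ absent from every other $v_j$, a step the paper asserts without detailed justification.
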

\begin{proof}
Suppose that there are pairwise distinct non-periodic minimal elements $$[v_1],\ldots, [v_s]$$ with $s>C$ of ${\rm Rec}(w)$.  Then since $[v_i]\not\preceq [v_j]$ for $i\neq j$, by Lemma \ref{lem:OK1} there are factors $y_1,\ldots, y_s$ of $v_1,\ldots ,v_s$, respectively, such that $y_i$ is not a factor of $v_j$ for $i\neq j$. Now since each $v_i$ is aperiodic and recurrent, $p_{v_i}(n;y_i)\ge n+1-|y_i|$ for all $n$, by Lemma \ref{lem:est}.  Moreover, if $S_i$ denotes the set of factors of $v_i$ that contain $y_i$ as a factor, then by construction $S_1,\ldots ,S_s$ are pairwise disjoint and so 
$$p_{w}(n)\ge \sum_{i=1}^s p_{v_i}(n;y_i) > Cn$$ for $n$ sufficiently large, a contradiction.  
\end{proof}
\begin{theorem} Let $w$ be a right-infinite word and suppose that there is a positive constant $C$ such that $p_w(n)\le Cn$ for all $n\ge 1$.  Then $\#{\rm Rec}(w)$ is at most $\#{\rm Per}(w)+\lceil C \rceil !^2-1$.\label{thm:ceil}
\end{theorem}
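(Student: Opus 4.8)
The plan is to bound the number of \emph{non-periodic} classes in ${\rm Rec}(w)$, since ${\rm Per}(w)\subseteq {\rm Rec}(w)$ and the assertion is equivalent to $\#\big({\rm Rec}(w)\setminus {\rm Per}(w)\big)\le \lceil C\rceil!^2-1$. Writing $\rho_u=\limsup_n p_u(n)/n$, I would prove by induction on $m=\lceil \rho_w\rceil$ that $\#{\rm Rec}(w)^{\rm np}\le g(m)$, where $g(1)=1$ and $g(m)=m!^2-1$ for $m\ge 2$, and then transfer this to the stated bound. The poset scaffolding is supplied by the earlier results: by Corollary~\ref{cor:acc}, ${\rm Rec}(w)$ satisfies the descending chain condition, so every class lies above a minimal one; and since ${\rm Fac}(u)\subseteq {\rm Fac}(v)$ with $u$ non-periodic forces $v$ non-periodic, every non-periodic class lies above a \emph{minimal non-periodic} class $[v_i]$, each of which is recurrent. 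Because $\{[u]\colon [v_i]\preceq[u]\}={\rm Rec}(v_i)$, the non-periodic part of ${\rm Rec}(w)$ is covered by $\bigcup_i {\rm Rec}(v_i)^{\rm np}$, so it suffices to count inside each ${\rm Rec}(v_i)$ and sum.

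The two quantitative inputs are refinements of Lemma~\ref{lem:OK}. \emph{Antichain reduction:} if $[v_1],\dots,[v_k]$ are the pairwise incomparable minimal non-periodic classes, then minimality together with recurrence produces a single factor $y_i\in{\rm Fac}(v_i)$ lying in no other $v_j$; comparing ${\rm Fac}(v_i)$ against the pairwise disjoint sets of factors containing $y_j$ for $j\ne i$ and invoking Lemma~\ref{lem:est} gives $p_w(n)\ge p_{v_i}(n)+(k-1)(n-|y_{\max}|)$, whence $\rho_{v_i}\le \rho_w-(k-1)$ and $\lceil \rho_{v_i}\rceil\le m-k+1$. \emph{Upward step:} for a recurrent $v$, the class $[v]$ is the unique minimum of ${\rm Rec}(v)$, so $\#{\rm Rec}(v)^{\rm np}=1+\#\{[u]\succ[v]\ \text{non-periodic}\}$; the minimal such $[u]$ form an antichain $w_1,\dots,w_\ell$, and using Theorem~\ref{thm:trich}(2) to furnish a factor $z$ of $v$ simultaneously avoided by all $w_j$, together with the same disjoint-factor estimate, gives $\ell\le \rho_v-1$, while Corollary~\ref{cor:a} gives $\lceil\rho_{w_j}\rceil\le \lceil\rho_v\rceil-1$.

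Assembling these, the upward step yields, for recurrent $v$ with $\lceil\rho_v\rceil=m\ge 2$, the recursion $g(m)\le 1+(m-1)\,g(m-1)$ with $g(1)=1$; and for a general $w$ the antichain reduction gives $\#{\rm Rec}(w)^{\rm np}\le \sum_i g\big(\lceil\rho_{v_i}\rceil\big)\le \max_{1\le k\le m} k\,g(m-k+1)$. A short induction then verifies that both expressions remain at most $m!^2-1$ for $m\ge 2$, the dominant term being $k=1$, i.e.\ $g(m)$ itself, for which $g(m)\le 1+(m-1)g(m-1)\le m!^2-1$ follows from $M^2(M-1)!^2-(M-1)(M-1)!^2\ge 1$. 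The two factors of $\lceil C\rceil$ that produce the factorial-squared come out transparently: one from the antichain of minimal classes, one from the upward branching, with Corollary~\ref{cor:a} dropping the level by one at each step.

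The delicate point, and the step I expect to require the most care, is the base level $m=1$. A non-periodic recurrent word can have $\rho=1$ (a Sturmian word being the prototype), so the limsup-induction genuinely returns $g(1)=1$ rather than $0$, and no purely limsup-based argument can reach the value $1!^2-1=0$ that level $1$ would otherwise demand. This is exactly where the hypothesis that $p_w(n)\le Cn$ holds \emph{for all} $n$, and not merely asymptotically, is used: any non-periodic word uses at least two distinct letters, so $p_w(1)\ge 2$ and hence $C\ge 2$, forcing $\lceil C\rceil\ge 2$ and $\lceil C\rceil!^2-1\ge 3$. Since $\lceil C\rceil\ge \lceil\rho_w\rceil$ and $x\mapsto x!^2-1$ is increasing, the inductive bound $g(\lceil\rho_w\rceil)$ is always at most $\lceil C\rceil!^2-1$: this is immediate when $\lceil\rho_w\rceil\ge 2$, and when $\lceil\rho_w\rceil=1$ the count is at most $1$ while the right-hand side is at least $3$. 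Adding back $\#{\rm Per}(w)$ then yields the stated inequality.
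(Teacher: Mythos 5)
Your argument is correct, and while it shares the paper's skeleton---the descending chain condition (Corollary \ref{cor:acc}) to reduce to minimal non-periodic classes, disjoint-factor counting via Lemma \ref{lem:est}, and Corollary \ref{cor:a} to drop the complexity by one at each level---the quantitative engine is genuinely different. The paper runs an infimum-of-counterexamples argument over the real parameter $C=\limsup_n p_w(n)/n$ and applies Lemma \ref{lem:OK} twice per level: once to bound the number of minimal non-periodic classes by $C$, and once to bound the branching above each such class by $C$; these two factors of $C$ per level are exactly what produce $\lceil C\rceil!^2$. You instead induct on the integer $m=\lceil\limsup_n p_v(n)/n\rceil$ and replace the second application of Lemma \ref{lem:OK} with a sharper upward step: Theorem \ref{thm:trich}(2) supplies a factor $z$ of $v$ avoided by every strictly larger recurrent class, so $p_v(n;z)\ge n+1-|z|$ together with the pairwise-disjoint factor sets attached to the minimal classes above $[v]$ bounds the branching by $\limsup_n p_v(n)/n-1\le m-1$ rather than by $C$ (the paper's proof of Theorem \ref{thm:ceil} never invokes Theorem \ref{thm:trich}(2)). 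Your antichain reduction, with its tradeoff $\lceil\rho_{v_i}\rceil\le m-k+1$ between the number $k$ of minimal classes and their complexity, is likewise a refinement the paper does not use. The net effect is the recursion $g(m)\le 1+(m-1)g(m-1)$, which grows like a single factorial, so your route actually proves a strictly stronger bound and then deliberately weakens it to $\lceil C\rceil!^2-1$; your explicit handling of the Sturmian base case (a non-periodic class forces $p_w(1)\ge 2$, hence $\lceil C\rceil\ge 2$) is also cleaner than the paper's implicit treatment via the normalization $C>1$ and final integer rounding. One small point to tighten when writing this up: define $g(m)$ by the recursion itself, or phrase the argument as induction on $m$ applied to each class, rather than as a maximum of counts over words, since the finiteness of that maximum is part of what is being proved.
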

\begin{proof}
We define ${\rm Rec}_{\rm ap}(u) = {\rm Rec}(u)\setminus {\rm Per}(u)$ for a right-infinite word $u$.
Notice that the result is vacuously true for $C\in (0,1)$.  Suppose that this is not the case for larger $C$.  Then let $C_0$ denote the infimum of all real numbers $C$ of the form $\limsup p_w(n)/n$ as $w$ ranges over right-infinite words of linear factor complexity for which $\#{\rm Rec}(w)$ is larger than the stated bound.  Then we can pick a right-infinite word $w$ with $\limsup p_w(n)/n := C < C_0+1/2$, and with $C>1$.  

By Corollary \ref{cor:acc}, ${\rm Rec}(w)$ satisfies the descending chain condition and therefore every $[v]\in {\rm Rec}(w)$ has the property that there is some minimal element $v'$ of the poset such that $[v]\succeq [v']$.  By Lemma \ref{lem:OK}, ${\rm Rec}(w)$ has at most $C$ non-periodic minimal elements.  Thus we see that ${\rm Rec}_{\rm ap}(w)$ has exactly $t$ minimal elements for some integer $t\le C$, and we let $v_1,\ldots ,v_t$ denote these elements.
Now applying Lemma \ref{lem:OK} again, the poset ${\rm Rec}_{\rm ap}(v_i)\setminus \{[v_i]\}$ contains at most $C$ minimal elements for $i=1,\ldots ,t$ and by Corollary \ref{cor:a} if $u_i$ is an aperiodic minimal element of the poset ${\rm Rec}(v_i)\setminus \{[v_i]\}$, then 
$\limsup p_{u_i}(n)/n \le C-1 <C_0-1/2$.  Then by our choice of $C_0$, we see that 
${\rm Rec}_{\rm ap}(u_i) \le \lceil C-1 \rceil !^2-1$ and since ${\rm Rec}_{\rm ap}(v_i) =[v_i]\cup \bigcup {\rm Rec}_{\rm ap}(u)$, where $u$ ranges over all minimal elements of ${\rm Rec}(v_i)\setminus \{[v_i]\}$.  In particular, we see that
$$|{\rm Rec}_{\rm ap}(v_i)|\le 1+ C (\lceil C-1 \rceil !^2-1) < C\lceil C-1 \rceil !^2,$$ since $C>1$ and $C-1<C_0-1/2$.  Since ${\rm Rec}_{\rm ap}(w)=\bigcup_{i=1}^t {\rm Rec}_{\rm ap}(v_i)$ and since $t\le C$, we see that
$$|{\rm Rec}_{\rm ap}(w)|<C^2\lceil C-1 \rceil !^2\le \lceil C \rceil !^2.$$ Since $|{\rm Rec}_{\rm ap}(w)|$ is an integer, we obtain the desired inequality.
\end{proof}
We may now prove our main result. 
\begin{proof}[Proof of Theorem \ref{thm:main1}]  The inequality for $\#{\rm Rec}(w)$ follows from Theorem \ref{thm:ceil} along with the inequality $\#{\rm Per}(w)\le \limsup_{n\to\infty} (p_w(n+1)-p_w(n)+1)$ from \cite{BS1}.  For the inequality for $\#{\rm URec}(w)$, observe that it again suffices to show that $\#{\rm URec}(w)-\#{\rm Per}(w)\le C^2$.  Lemma \ref{lem:OK} shows that the collection of minimal elements of the poset ${\rm Rec}(w)\setminus {\rm Per}(w)$ is at most $C$.  We let $[v_1],\ldots ,[v_s]$ denote these minimal elements, where $s\le C$.  We claim that there are at most $C$ aperiodic uniformly recurrent elements in ${\rm URec}(v_i)$.  Once we have established this claim, we will then have
$\#({\rm URec}(w)\setminus {\rm Per}(w))\le s\cdot C\le C^2$, as desired.  To establish the claim, suppose that for some $i$, ${\rm URec}(v_i)$ has aperiodic uniformly recurrent pairwise distinct elements $[z_1],\ldots ,[z_t]$ with $t>C$.   Then by Lemma \ref{lem:22}, $[z_1],\ldots ,[z_t]$ are maximal elements of ${\rm Rec}(v_i)$ and hence $[z_i]\npreceq [z_j]$ for $i\neq j$.  Thus by Lemma \ref{lem:OK1}, for each $j$ there is a factor $y_j$ of $z_j$ that is not a factor of $z_k$ for $k\neq j$.  By the uniformly recurrent property, there is some $N$ such that every factor of $z_j$ of length $N$ contains $y_j$ for $j=1,\ldots ,t$.  It follows that for $n\ge N$ the factors of $z_1,\ldots ,z_t$ of length $n$ are disjoint sets and so for $n\ge N$, 
$$Cn \ge p_{v_i}(n) \ge \sum_{j=1}^t p_{z_j}(n) \ge t\cdot (n+1),$$ and so we get a contradiction.  The result now follows.
\end{proof}
\begin{remark}
The construction given in \cite[\S5]{BS1} shows that for every weakly increasing function $f:\mathbb{N}\to \mathbb{N}$, there is a word $w=w(f)$, with $p_w(n)={\rm O}(nf(n))$ such that $\#{\rm Per}(w)=\infty$.  In particular, $\#{\rm Rec}(w)$ is infinite for these words.
\label{rem:rec}
\end{remark}
\section{Concluding remarks}
Theorem \ref{thm:main1} shows that ${\rm Rec}(w)$ is a finite poset whenever $w$ is a word of linear factor complexity.  An interesting question is to characterize the posets that can arise.
\begin{question} Can one give a characterization of the posets that can be realized as ${\rm Rec}(w)$ where $w$ is a word of linear factor complexity?
\end{question}
The software Walnut \cite{Mousavi:2016} has had a revolutionary impact on the theory of automatic words.  Applying the software in practice requires expressing potential theorem statements in a first-order language (see \cite{Charlier&Rampersad&Shallit:2012}).  It would be interesting to know whether there are related algorithms for determining $\#{\rm Rec}(w)$, $\#{\rm URec}(w)$, and $\#{\rm Per}(w)$ for automatic words.  
\begin{question} Can one give a decision procedure that takes automatic words $w$ as input and outputs the sizes of $\#{\rm Rec}(w)$, $\#{\rm URec}(w)$, and $\#{\rm Per}(w)$ as output?
\end{question}
\section*{Acknowledgments}
I thank Jeffrey Shallit for many helpful comments and suggestions.  I also thank the anonymous referee for many useful comments and corrections. 

\end{document}